\newtheorem{theorem}{Theorem}
\newtheorem{lemma}[theorem]{Lemma}
\newtheorem{problem}[theorem]{Problem}
\newenvironment{proof}[1][Proof]{\noindent\textbf{#1.} }{\ \rule{0.4em}{0.4em}}
\renewcommand{\Re}{{\text{Re\,}}}
\renewcommand{\Im}{{\text{Im\,}}}
\newcommand{\grad}{\nabla}
\newcommand{\gradx}{\nabla\hspace{-2.5pt}_x}
\newcommand{\grady}{\nabla\hspace{-2.5pt}_y}
\newcommand{\curl}{\nabla\hspace{-0.25em}\times\hspace{-0.2em}}
\newcommand{\gradp}{{\grad^\perp}}
\newcommand{\gradpy}{{\grad^\perp_{\hspace{-1.5pt}y}\hspace{1pt}}}
\newcommand{\gradpx}{{\grad^\perp_{\hspace{-1.5pt}x}\hspace{1pt}}}
\newcommand{\Honeper}{H^1_\#}
\newcommand{\Honek}{H^1_\kappa}
\newcommand{\half}{{\textstyle{\frac{1}{2}}}}
\newcommand{\e}{\varepsilon}
\newcommand{\ii}{\mathrm{i}}
\newcommand{\io}{{\ii\omega}}
\newcommand{\ik}{{\ii\kappa}}
\newcommand{\xxi}{\xi}
\renewcommand{\|}{{|\hspace{-1.3pt}|\vspace*{-6pt}}}
\newcommand{\pair}[4]{\left#1 \hspace{-0.5em} \begin{array}{l} #2\vspace{0.7ex} \\ #3 \end{array} \hspace{-0.5em} \right#4}
\newcommand{\triple}[5]{\left#1 \hspace{-0.5em} \begin{array}{l} #2\vspace{0.7ex} \\ #3\vspace{0.7ex} \\ #4 \end{array} \hspace{-0.5em} \right#5}
\newcommand{\HH}{\mathbf{H}}
\newcommand{\EE}{\mathbf{E}}
\newcommand{\JJ}{\mathbf{J}}
\newcommand{\Bav}{B_{\mathrm{av}}}
\newcommand{\Hav}{H_{\mathrm{av}}}
\newcommand{\Eav}{E_{\mathrm{av}}}
\newcommand{\Dav}{D_{\mathrm{av}}}
\newcommand{\bav}{b_{\mathrm{av}}}
\newcommand{\hav}{h_{\mathrm{av}}}
\newcommand{\hi}{h_{\mathrm{i}}}
\newcommand{\he}{h_{\mathrm{e}}}
\newcommand{\Ei}{E_{\mathrm{i}}}
\newcommand{\Ee}{E_{\mathrm{e}}}
\newcommand{\vi}{v_{\mathrm{i}}}
\newcommand{\ve}{v_{\mathrm{e}}}
\newcommand{\ui}{u_{\mathrm{i}}}
\newcommand{\ue}{u_{\mathrm{e}}}
\newcommand{\we}{w_{\mathrm{e}}}
\newcommand{\tue}{\tilde u_{\mathrm{e}}}
\newcommand{\tueeta}{\tilde u_{\mathrm{e}}^\eta}
\newcommand{\tuieta}{\tilde u_{\mathrm{i}}^\eta}
\newcommand{\Pe}{P_{\mathrm{e}}}
\renewcommand{\Pi}{P_{\mathrm{i}}}
\renewcommand{\O}{\Omega}
\newcommand{\Oe}{{\Omega_{\mathrm{e}}}}
\newcommand{\Oi}{{\Omega_{\mathrm{i}}}}
\newcommand{\chie}{{\chi_{\mathrm{e}}}}
\newcommand{\chii}{{\chi_{\mathrm{i}}}}
\newcommand{\Oieta}{{\Omega_{\mathrm{i}}^\eta}}
\newcommand{\Oeeta}{{\Omega_{\mathrm{e}}^\eta}}
\renewcommand{\d}{\mathrm{d}}
\newcommand{\sigmas}{\sigma_{\!\mathrm{s}\,}}
\newcommand{\dO}{{\partial\O}}
\newcommand{\dOi}{{\partial\Oi}}
\newcommand{\dOieta}{{\partial\Oieta}}
\newcommand{\D}{{G}}  
\newcommand{\Dc}{{{\D}^*}}
\newcommand{\dD}{{\partial \D}}
\newcommand{\QD}{{{\cal Q}\hspace{-3pt}\setminus\hspace{-3pt}D}}
\newcommand{\QQ}{{\cal Q}}
\newcommand{\kk}{k} 
\renewcommand{\tt}{t} 
\newcommand{\nn}{n} 
\newcommand{\kx}{\kk\!\times\!}
\renewcommand{\dot}{\!\cdot\!}
\newcommand{\strip}{{\cal S}}
\newcommand{\hinc}{h_\mathrm{inc}}
\newcommand{\tsc}{\rightharpoonup\hspace{-1.7ex}\rightharpoonup}
\newcommand{\bigo}{{\cal O}}
\newcommand{\aeo}{a^\eta_\omega}
\newcommand{\beo}{b^\eta_\omega}
\newcommand{\ce}{c^\eta}
\newcommand{\Beo}{B^\eta_\omega}
\newcommand{\Ce}{C^\eta}
\begin{document}

\bibliographystyle{plain}

\title{Magnetism and homogenization of micro-resonators}
\author{Robert V. Kohn\footnote{Courant Institute of Mathematical Sciences, New York University, 251 Mercer Street, New York, NY 10012-1185, USA.
kohn@courant.nyu.edu}
\\
Courant Institute
\and
Stephen P. Shipman\footnote{
Department of Mathematics, Louisiana State University, Lockett Hall 304, Baton Rouge, LA 70803-4918, USA.
shipman@math.lsu.edu
}
\\
Louisiana State University
}
\maketitle

\begin{abstract}\noindent
Arrays of cylindrical metal micro-resonators embedded in a dielectric matrix were proposed by Pendry, {\it et.\,al.}, \cite{PendryHoldenRobbins1999} as a means of creating a microscopic structure that exhibits strong bulk magnetic behavior at frequencies not realized in nature.  This behavior arises for $H$-polarized fields in the {quasi-static} regime, in which the scale of the micro-structure is much smaller than the free-space wavelength of the fields.  We carry out both formal and rigorous two-scale homogenization analyses, paying special attention to the appropriate method of averaging, which does not involve the usual cell averages.  We show that the effective magnetic and dielectric coefficients obtained by means of such averaging characterize a bulk medium that, to leading order, produces the same scattering data as the micro-structured composite.
\end{abstract}

\noindent
{\bf Keywords:}\ homogenization; meta-material; micro-resonator; magnetism; quasi-static.

\section{Introduction} 
\label{sec:introduction}

Within the field of artificial materials, there is presently intense activity in the area of creating ``metamaterials" with {\em negative} bulk dielectric or magnetic response.  Materials with dielectric or magnetic coefficients that are either simultaneously negative or of opposite sign offer a rich variety of interesting and useful phenomena.  
As nature provides us with materials that exhibit negative response only at rather restrictive frequencies, one of the aims of this field is to extend the selection of frequencies by creating microscopic structures that have resonant response when natural materials tend to be unresponsive.  
The field received a jump start with the introduction of model structures of thin metallic wires for creating electric resonance \cite{PendryHoldenRobbins1998} and ring-type structures for creating magnetic resonance \cite{PendryHoldenRobbins1999}, which were proposed by Pendry, Holden, Robbins, and Stewart in the late 1990s.  Combinations of these effects were investigated by Smith, {\em et.\,al.} \cite{SmithPadillaVier2000} and many others, to create ``left-handed" materials, possessing a negative index of refraction.  More recently, Pendry \cite{Pendry2004a} proposed the creation of negative refraction by composites in which one of the components is chiral, and the homogenization of such structures has been investigated in \cite{GuenneauZolla2007}.
There seems to be considerable debate and some confusion concerning the definition and meaning of bulk effective electromagnetic coefficients in this setting; moreover, rigorous mathematical treatment of the subject is still in its early stages.

Our intention with this work is to help clarify the meaning of the effective bulk dielectric permittivity and magnetic permeability in the {\em quasi-static limit}, in which the scale of the micro-structure is small compared to the free-space wavelength of the fields.  We work only with a {\em two-dimensional model of ring-type resonators}, for which magnetism is the dominant effect.
The mathematical context of our study is {\em periodic homogenization}.  The relations between the $D$ and $E$ fields and between the $B$ and $H$ fields of the individual components of the micro-structured composite material,
\begin{equation}
  D = \e E,  \qquad B = \mu H,
\end{equation}
give rise to bulk coefficients $\e^*$ and $\mu^*$ that govern certain average fields on the macroscopic level:
\begin{equation}
  \Dav = \e^*\Eav,  \qquad \Bav = \mu^*\Hav.
\end{equation}
What is noteworthy in the homogenization of micro-resonators is that these average fields are not to be understood in the standard way as micro-cell averages.  Indeed, as Pendry, {\it et.\,al.}, \cite{PendryHoldenRobbins1999} observed, even if $\mu = \mu_0$ for all components, we still obtain a nontrivial magnetic response, $\mu^*\not=\mu_0$.

The crucial ingredient for emergence of magnetic behavior is the {\em presence of a component with extreme physical properties}.  In fact, the rings in our resonators must possess high conductivity or internal capacitance tending to infinity as the inverse of characteristic length of the micro-structure.

For broad discussions of electromagnetic materials with negative coefficients, one may consult \cite{Pendry2004}, \cite{SoukoulisKafesakiEconomou2006}, or \cite{Ramakrish2005}, for example.

\subsection{Magnetism from micro-resonators}

In our model, the micro-resonators are represented by infinitely long rods with conducting surfaces (Figure \ref{fig:strip}).  The fields are harmonic (with frequency $\omega$) and magnetically polarized.  The magnetic field, denoted by the scalar $h(x_1,x_2)$, is directed parallel to the rods,
while the electric field $E(x_1,x_2)$ lies in the plane perpendicular to the rods.  The electric field induces a current $j$ on the surfaces of the resonators; it is related to the tangential component $E(x_1,x_2)$ through a {\em complex} (more on this in section \ref{subsec:model}) surface conductivity $\sigmas$: $j = \sigmas E\dot t$.  The current, in turn, effects a discontinuity in the magnetic field, $\hi - \he = j$, where $\he$ and $\hi$ denote the values of $h$ exterior and interior to the micro-resonators.  The Maxwell system of partial differential equations reduces to the following system for $h$ and $E$
(where $\gradp := \kx \grad
         = \left\langle - \partial/\partial {x_2}, \partial/\partial {x_1} \right\rangle
$):
\begin{eqnarray}
  && \pair{.}{\gradp\cdot E - \io\mu h \,=\, 0, }{\gradp h - \io\e E \,=\, 0,}{\}}
             \;\; \text{off the surfaces of the micro-resonators,} \label{Maxwell}\\
  && \he - \hi + \sigmas E\dot t \,=\, 0, \quad \text{with $E\dot t$ continuous on the surfaces,}
              \label{jump}
\end{eqnarray}
with the convention that the unit tangent vector is directed in the counter-clockwise sense.

The quasi-static limit amounts to fixing the frequency and allowing the period of the micro-structure to tend to zero.  This is to be contrasted with work of Sievenpiper, {\it et.\,al.}, \cite{SievenpipYablonoviWinn1998} for example, who devise capacitative structures for the manipulation of photonic spectral gaps, a phenomenon that is pronounced when the wavelength and period are comparable.
Starting from (\ref{Maxwell}--\ref{jump}), we shall derive a system of Maxwell equations governing suitably defined {\em macroscopic} fields.  Because the currents around the resonators flow in microscopic loops, they do not appear as currents in the homogenized equations.  Instead, their bulk effect is manifest through the effective magnetic coefficient $\mu^*$, which is complex (even if $\e$ and $\mu$ are real) because it incorporates the effect of loss due to the currents.  Our homogenized system is
\begin{equation}\label{homogenized1}
  \pair{.}{\gradp\dot \Eav - \ii\omega\mu^* \he^0  \,=\, 0,}
           {\gradp \he^0 - \ii\omega\e^*\Eav \,=\, 0.}{.}
\end{equation}
It is important that the equations for the bulk fields are posed in terms of the {\em exterior value of the magnetic field}, which is denoted by $\he^0$, not in terms of its cell average.

At the risk of redundancy, we emphasize the following two key points:

\begin{enumerate}
\item The characteristic of the micro-structure that is crucial for the emergence of magnetic response from nonmagnetic components is that {\em one of the material properties, the surface conductivity, is extreme.}  More precisely, it scales inversely with the microscopic length scale.
\item In the homogenized Maxwell system, {\em the macroscopic $H$ field is not the cell average of the field but rather the value exterior to the resonator.}  The macroscopic $B$ field, however, is the usual micro-cell average.
\end{enumerate}
In fact, not only are the $H$ and $B$ fields averaged over different parts of a unit cell, but so are the $E$ and $D$ fields.  This means that, just as we have discussed for the magnetic coefficient, even if $\e=\e_1$ in all components of the unit cell, the effective dielectric coefficient $\e^*$ will typically be different from $\e_1$.  This feature
distinguishes the homogenization of micro-resonators from the more ``standard" homogenization of composites with perfectly bonded interfaces and material properties that are not extreme, in which all macroscopic fields are understood as micro-cell averages.

The second feature is already present in the problem of homogenization porous media, in which the cell average is taken {\em outside} the holes \cite{CioranescSaint-Jea1979}.  More recently, both of these features appeared in the work of Bouchitt\'e and Felbacq \cite{BouchitteFelbacq2004,BouchitteFelbacq2005,FelbacqBouchitte2005,FelbacqBouchitte2005a}, who demonstrated the emergence of bulk magnetic behavior from nonmagnetic materials in a somewhat different but related problem.  The extreme property in their setting is the dielectric coefficient inside a periodic inclusion, which tends to infinity as the inverse area of a micro-cell.  The magnetic field in the matrix, exterior to the inclusions, has vanishing fine-scale variation and appears as the macroscopic field in the homogenized equations.  This exterior value drives the fine-scale oscillations in the interior, whose resonant frequencies produce extreme magnetic behavior.  The point in their problem as well as ours is that the effective equation involves only the $H$ field exterior to the inclusion (or resonator), but the $B$ field must be averaged over the entire micro-cell.  A similar problem in which fiber arrays have extreme conducting properties in the plane perpendicular the fibers but not in the direction of the fibers has been investigated rigorously by Cherednichenko, Smyshlyaev, and Zhikov \cite{CherednicSmyshlyaeZhikov2006}.  Here again, the field in the matrix appears in the effective equations, which possess the additional feature of spatial non-locality in the direction of the fibers.
These problems are to be contrasted with the case of a small-volume-fraction array of conducting metallic fibers of finite length, treated by Bouchitt\'e and Felbacq \cite{BouchitteFelbacq2006}.  In that setting, the conductivity is extreme but the field averages are nevertheless taken in the usual way. 

We shall discuss the homogenized system \eqref{homogenized1}---and our scheme for defining ``averaged fields" and ``effective properties"---further in Section \ref{sec:overview}, and we offer a systematic justification in Section \ref{discussion} following the formal asymptotic analysis.  The main points are these:
\begin{enumerate}
\item The effective coefficients $\e^*$ and $\mu^*$ describe the limiting behavior of the scattering problem.
\item Taking the value of $H$ exterior to the resonators as the macroscopic field is the only choice that preserves the Maxwell-type structure of the effective system of PDEs (see the comments following equation \eqref{homogenized0}).
\item The averaging scheme is consistent with the treatment of the $E$, $D$, $H$, and $B$ fields as differential forms (section \ref{discussion}).
\end{enumerate}

Point (1) means that the field scattered by a micro-structured object when illuminated by a plane wave should tend to the field scattered by an object of the same shape consisting of a material possessing the bulk coefficients.  In particular, the reflection and transmission coefficients associated to scattering by a micro-structured slab should approach those for the homogenized slab.  This is the model we have chosen for our analysis.  It is important to keep in mind that, in any scattering problem, we assume that {\em no resonator is cut or exposed to the air}, so that the $H$ field in the matrix dielectric surrounding the resonators connects continuously with the $H$ field in the air.  In other words, the air-composite interface cuts only through the matrix material.  Although we do not treat boundary-value problems, it is evident by the same reasoning that the same effective equations remain valid for problems in which the boundary is exposed only to the matrix.

\subsection{Scaling of fields in the quasi-static limit}

Let us take a more careful look at our particular scalings.

As one expects, the variation of $h$ at the scale of the micro-structure, which we denote by $\eta$, vanishes in the matrix as this fine scale tends to zero (the quasi-static limit); the microscopic variation of $h$ is due solely to its discontinuities at the current-carrying surfaces of the resonators.  (On the other hand, the electric field $E$ will have non-vanishing micro-periodic oscillations in the matrix, even if $\e$ and $\mu$ are constant.)  In the quasi-static limit, therefore, the jump in $h$, which we call the current $j$, is constant around a single micro-resonator.

We have emphasized the point that interesting magnetic behavior arises only when the micro-resonators are highly conducting.  This means that the surface conductivity $\sigmas$ tends to infinity as the size $\eta$ of a period cell tends to zero.
The reason is described by Pendry, {\it et.al.}, \cite{PendryHoldenRobbins1999} and is borne out by our homogenization analysis.
The electromotive force (EMF) around a single resonator is given by the line integral of the electric field around its surface.  If the interior of one resonator occupies a region $\D_\eta$, then, using the first of the Maxwell equations \eqref{Maxwell}, we can write the EMF in two ways:
\begin{eqnarray}
  && \text{EMF} \,=\, \int_{\partial \D_\eta} E\dot t \,\d s \,\sim\, \eta\,\bigo(E\dot t), \\
  && \text{EMF} \,=\, \io\int_{\D_\eta} \mu h \,\d A \,\sim\, \eta^2 \bigo(h).
\end{eqnarray}
As $h$ is of order 1 in $\eta$, $E\cdot t$ is forced to be of order $\eta$ on the surface of the resonators.  Now, using the relation $\he - \hi + \sigmas E\cdot t \,=\, 0$, we observe that, in order that nontrivial behavior emerge at the microscopic level, we should use a highly conducting material, namely, $\sigmas \sim \eta^{-1}$.  In summary, the relevant scalings in our analysis are the following.

\begin{center}
\begin{tabular}{ll}
  $h$ and $E$ in the matrix: & $\bigo(1)$ \\
  current and discontinuity of $h$ on surfaces: & $\bigo(1)$ \\
  $E\dot t$ on surfaces: & $\bigo(\eta)$ \\
  surface conductivity: & $\bigo(\eta^{-1})$
\end{tabular}
\end{center}

\subsection{The model for micro-resonators}
\label{subsec:model}

If the resonator consists of a solid metal cylinder, the connection between the current and the electromagnetic fields is accomplished by a simple constitutive law relating $j$ to $E\cdot t$ through a real scalar $\sigmas$, the surface conductivity.  In this case, the resonator acts purely as an inductor.  If the solid cylinder is replaced by a uniform solid metal ring, effectively the same situation persists, as there is no uneven distribution of charge on the two surfaces of the ring that would give rise to capacitative effects.

More elaborate resonators allow for capacitative effects by forcing nonuniform build-up of charge around the ring.  These are known as split-ring resonators (SRR) (see \cite{PendryHoldenRobbins1999} or \cite{Ramakrish2005}, for example), composite ring structures consisting in part of dielectric material and in part of one or more incomplete metal rings.  As the temporal distribution of charge is out of phase with that of the current, so also are the inductive and capacitative contributions to the jump in~$h$.

It is not our intention to provide a rigorous account of the inductive and capacitative effects of SRRs in the quasi-static limit.\footnote{
The task of providing such an account, {\it i.e.}, justifying equation \eqref{sigmas} below starting from the Maxwell equations, remains in our view an open problem worthy of analysis.
}
Rather, in this work we are content to observe that in the literature, the electromotive force (EMF) around the ring arises from integrating two quantities around it, both related to the current $j$:
\begin{enumerate}
\item An inductive component $\rho j$, where the real quantity $\rho$ is the effective resistance of the metal in the composite ring.
\item A capacitative component $j/(-\io C)$, where the real quantity $C$ is the capacitance arising from splits in the rings or the proximity of two closely stacked metal rings within the compound ring. 
\end{enumerate}

Now, using the representation EMF $= \int_{\partial \D_\eta} E\dot t \,\d s$, we model a general resonator by a single closed loop (a cylindrical surface in the three-dimensional realization)
on which the inductive and capacitative effects are manifest through a single phenomenological {\em complex} constitutive law $j = \sigmas E\dot t$, where $\sigmas = \sigmas^1 + \ii\sigmas^2$.  This amounts to defining a singular current $\sigmas^1 E\cdot t$ and a singular $D$ field of strength $-(\sigmas^2/\omega) E\cdot t$ around the resonator.

We now demonstrate the nature of the correspondence between our formula for the effective magnetic permeability and that obtained by Pendry, {\it et.\,al.}  First, consider the case that $\mu=\mu_0$ in all components and that the resonator is represented by a simple metal cylinder of (nondimensional) radius $R<1$ in relation to a scaled unit cell (Fig. \ref{fig:cell}, right).  The actual radius is $r = \eta LR$, where $L$ is an arbitrary fixed length corresponding to 1 in the macroscopic variable $x$.  Keeping in mind that the conductivity should tend to infinity with $\eta^{-1}$, we set it equal to
\begin{equation}
\sigmas = \frac{1}{\eta\,\rho\,}\,,
\end{equation}
where $\rho$ is a fixed real constant.  From equations \eqref{one}, \eqref{two}, and \eqref{three}, we obtain
\begin{equation} \label{mustar}
\mu^* = \mu_0\left[ 1 - \pi R^2 \left( 1 + \frac{2\ii\rho}{\omega R\mu_0} \right)^{\!-1} \right],
\end{equation}
which is the formula (13) in \cite{PendryHoldenRobbins1999}.\footnote
{To make the connection to formula (13) $\mu_\mathrm{eff} = 1-\frac{\pi r^2}{a^2}\left( 1 + \frac{2\ii\sigma}{\omega r\mu_0} \right)^{-1}$ and (17) or (20)
$\mu_\mathrm{eff} = 1-\frac{\pi r^2}{a^2}\left( 1 + \frac{2\ii\sigma}{\omega r\mu_0} - \frac{3d}{\pi^2\mu_0\omega^2\e_0r^3}\right)^{-1}$ in \cite{PendryHoldenRobbins1999}, one must relate the notation in that paper to our notation in the following way: $a\mapsto\eta L$, $r\mapsto r$, $d\mapsto d$, $\mu_0\mu_\mathrm{eff}\mapsto\mu^*$, $\sigma\mapsto\eta L\rho$.
Note that the symbol $\sigma$ in \cite{PendryHoldenRobbins1999} denotes the resistance of the metal, which is the reciprocal of conductivity.
}

Now, in order to incorporate capacitance into a SRR as in Figure \ref{fig:cell}, we take a phenomenological step and observe that, in order that our model produce formula (20) of \cite{PendryHoldenRobbins1999}, the {\em complex} conductivity must now be set equal to
\begin{equation}\label{sigmas}
  \sigmas = \frac{1}{\eta}\left( \rho + \ii\tau \right)^{-1} = \frac{1}{\eta}\left( \rho + \ii\frac{3\Delta}{2\pi^2 \omega \e_0 r^2} \right)^{\!-1},
\end{equation}
where $d=\eta L\Delta$ is the small distance between the outer and inner shells making up the circumference of the resonator.  Here, $\Delta$ is the nondimensionalized distance in a rescaled unit cell (Fig. \ref{fig:cell}, left).  The coefficient $\e_0$ appears in \ref{sigmas} because it is assumed that $\e=\e_0$ in the dielectric between the inner and outer rings.  The formula for $\mu^*$ becomes
\begin{equation}\label{mustar1}
\mu^* = \mu_0\left[ 1 - \pi R^2 \left( 1 + 
                          \frac{2}{\omega R\mu_0}\left( \ii\rho - \tau \right) \right)^{\!-1} \right].
\end{equation}

In order that the imaginary part of $\sigmas$ tend to infinity as $\eta^{-1}$, we must have
\begin{equation}
  \Delta \sim \eta^2.
\end{equation}
This means that the actual distance $d=\eta L\Delta$ between shells would have to decrease as the {\em cube} of the cell length in order for the complex part of $\sigmas$ to have an effect in formula~\eqref{mustar} in the limit as $\eta\to0$.  Of course, if the dielectric coefficient is allowed to tend to zero at some rate, the extreme rate of convergence of $\Delta$ to zero can be relaxed.

We observe from \eqref{mustar1} that if $\mu_0$ is real and positive then $\mu^*$ has positive imaginary part (provided $\rho>0$).  The assumption used above that $\mu\equiv\mu_0$ in all components was merely a simplification; even if $\mu$ varies in space, a similar calculation of $\mu^*$ is straightforward.  It reveals that, if $\mu$ is real and positive throughout the structure, then
\begin{equation}\label{Immu}
  \Im(\mu^*) > 0.
\end{equation}
We shall need this fact in Section \ref{sec:twoscale}.

We note that our homogenization-based analysis might not always be adequate for modeling the behavior of a specific device.  In particular, it is difficult to fabricate micro-resonators with shells that are extremely close to each other; moreover, the regime of interest is often not quasi-static, {\it e.g.}, the wavelength may be only several times the length of a unit cell.

\begin{figure}
\begin{center}
\includegraphics[width=0.90\textwidth]{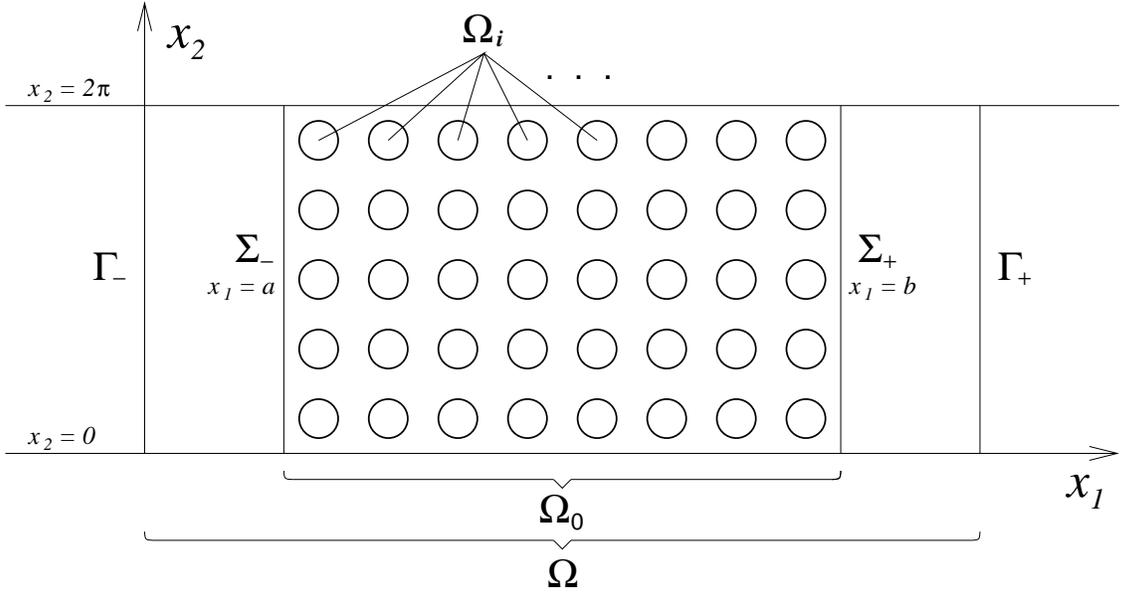}
\end{center}
\caption{The strip $\cal S$, consisting of one period in $x_2$ of the micro-structured slab and  surrounding air.  The segments $\Gamma_\pm$ are artificial boundaries used in the weak formulation of the scattering problem.  We define also $\Oe= \Omega\setminus\Oi$ and
$\Omega_{0\text{e}} = \Omega_0 \setminus \Oi$.}
\label{fig:strip}
\end{figure}

\begin{figure}
\begin{center}
\includegraphics[width=0.3\textwidth]{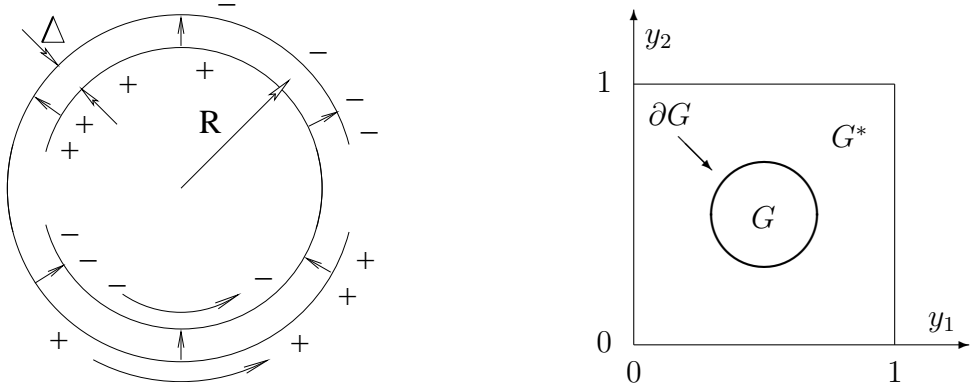}
\hspace{0.15\textwidth}
\setlength{\unitlength}{1.2em}
\begin{picture}(10,10)
  \thinlines
  \put(1,1){\vector(1,0){9}}
  \put(1,1){\vector(0,1){9}}
  \put(1,8){\line(1,0){7}}
  \put(8,1){\line(0,1){7}}
  \put(0.8,0){0}
  \put(0,0.8){0}
  \put(7.8,0){1}
  \put(0,7.8){1}
  \put(1.3,9.1){$y_2$}
  \put(8.9,1.5){$y_1$}
  \thicklines
  \put(4.5,4.5){\circle{4}}
  \put(4.15,4.15){$\D$}
  \put(6.3,6.3){$\D^*$}
  \put(1.4,6.9){$\partial\D$}
  \thinlines
  \put(2.2,6.6){\vector(1,-1){0.9}}
\end{picture}
\end{center}
\caption{{\bf Left:} An example of a split-ring resonator.  The current flows in the direction of the arrows.  The splits in the rings cause the charge to be nonuniform, creating a large $D$-field between the rings.  This allows the current to follow a complete circuit by passing from one ring to the other through a capacitative gap.  {\bf Right:}  The microscopic unit cell $\QQ$ coordinatized by the variable $y= x/\eta$.  The idealized metal-dielectric micro-resonator consists of the boundary $\dD$ of a simply connected domain $\D$.}
\label{fig:cell}
\end{figure}

\subsection{Our approach}

The effective tensors $\e^*$ and $\mu^*$ and the unit-cell problem that determines them, together with the first-order correction to the leading order $H$ field, are obtainable through formal asymptotic analysis, which we carry out because we feel it provides a clear intuitive point of view.  We then prove the main results rigorously using the method of two-scale convergence (see Allaire \cite{Allaire1992} for a systematic treatment).

The main ideas of the rigorous two-scale arguments are these.  The weak form of the scattering problem, Problem \ref{microweak} in Section \ref{sec:twoscale}), is posed within the $\eta$-dependent space
\begin{equation}
H^1(\Oeeta)\oplus H^1(\Oieta),
\end{equation}
where $\Oeeta$ and $\Oieta$ are the domains exterior to and interior to the micro-resonators.  The boundary term, involving the conductivity, brings the jump discontinuity of the $H$ field into the equation.  Notice that, if the conductivity remained bounded as $\eta\to0$, then this term would be of order $\eta^{-1}$ and the jump would disappear in the limit.  Thus, as we have pointed out, we take the conductivity to be of order $\eta^{-1}$ so that the boundary integrals remain on the same order as the the area integrals.

We then show that the scattering problem always has a solution $h^\eta$ for each $\eta$.  As we do not have {\it a priori} bounds on the solutions, we first scale them and obtain the two-scale limits of the scaled solutions as well as those of their gradients.  The uniqueness of the solution of the homogenized system governing these scaled solutions then allows us to obtain {\it a postiori} uniform bounds on the actual solutions.

The main result is Theorem \ref{thm:vartwoscale}, which presents the two-scale variational problem for the functions $\he^0(x)$, $\hi^0(x)$, $\he^1(x,y)$, and $\hi^1(x,y)$ in the two-scale limits
\begin{equation}
\pair{.}{h^\eta(x) \tsc h^0(x,y) = \chie(y)\he^0(x) + \chii(y)\hi^0(x), \vspace{3pt}}
           {\grad h^\eta(x) \tsc \chie(y)[\grad \he^0(x) + \grady \he^1(x,y)]
                                                      + \chii(y)[\grad \hi^0(x) + \grady \hi^1(x,y)], \vspace{3pt}}
           {.}
\end{equation}
and its equivalence to the homogenized Maxwell system together with the unit cell problem (section \ref{subsec:cell}) that determines the gradients of the corrector functions $\he^1$ and $\hi^1$ as well as the effective coefficients $\mu^*$ and $\e^*$.

In Theorem \ref{thm:strong}, we obtain the strong two-scale convergence of $h^\eta$ and $\grad h^\eta$ to their two-scale limits.  Strong two-scale convergence strengthens two-scale convergence by asserting convergence of energies.
Theorem \ref{thm:transmission} asserts the convergence of the transmission and reflection by the micro-structured slab to that of the homogenized one.

\bigskip
\noindent
{\bf \large Acknowledgments}
\medskip

\nopagebreak
The authors would like to acknowledge the inspiration of the IMA ``Hot Topics" Workshop on
Negative Index Materials in October of 2006 at the University of Minnesota, cosponsored by the Air Force Office of Scientific Research.  This Workshop provided the foundation and stimulation for our work in this area.

The effort of R.V.K. was supported by NSF grants DMS-0313890 and DMS-0313744, and the effort of S.P.S was supported by NSF grant DMS-0313890 during his visit at the Courant Institute and by NSF grant DMS-0505833.

\section{Overview of results}  
\label{sec:overview}

This section serves to highlight the main results of the calculations of Section \ref{sec:expansion}, including the ``unit-cell problem" for the corrector functions and definitions of the effective electric and magnetic coefficients.  In the presentation, we try to illuminate the physical meaning of the results and the methods of averaging.

The feature of our system that leads to interesting magnetic behavior, as previously discussed, is the high surface conductivity of the resonators, which scales inversely to the length of the period micro-cell, that is, $\sigmas(x) = \sigma(x,x/\eta)/\eta$ for $x$ on the surfaces of the resonators.  The dielectric and magnetic coefficients in the matrix exterior and interior to the surface of the resonators remain bounded, and are given by $\e(x,x/\eta)$ and $\mu(x,x/\eta)$.

The Maxwell system for $H$-polarized fields reduces to an elliptic equation for the scalar magnetic field $h$ which includes interaction with a current on the surface of a conducting resonator.
Because of the extreme value of the conductivity, to leading order, the magnetic field $h$ behaves on the microscopic scale as a piecewise constant function, with jump discontinuities at the micro-resonator interfaces given by the surface current.  The electric field, by contrast, exhibits fine-scale oscillations of order 1 in the matrix, and its tangential component is continuous across the micro-resonator interfaces.

The exterior and interior values of $h$, as functions of the macroscopic variable $x$, are related to each other through a fixed complex ratio $m(x)$ determined by the shape and conductivity of the micro-resonator, the magnetic coefficient in its interior, and the frequency.  Representing the leading-order behavior of the exterior and interior values of $h$ by the functions $\he^0(x)$ and $\hi^0(x)$, we will show in section \ref{subsec:cell} that
\begin{equation}
\hi^0(x) = m(x) \he^0(x), \qquad   m(x) = \frac{\hat\rho(x)}{\hat\rho(x) - \io\hat\mu(x)},
\end{equation}
in which $\hat\mu$ and $\hat\rho$ are weighted averages of the magnetic coefficient and the resistance (the reciprocal of the conductivity) of the micro-resonator.
%
%
Thus as a function of macro- and microscopic variables, the leading behavior of $h$ is given by
\begin{equation}
  h^0(x,y) = M(x,y) \he^0(x),
\end{equation}
where $M(x,y)$, which characterizes the microscopic variation of $h$, takes on two values:
$M(x,y) = \chi_\D(y) + m(x)\chi_\Dc(y)$.  As we have mentioned in the introduction, the physical principle behind expression for $m(x)$ is a balance of electromotive forces, which, in the microscopic variable, takes the form
\begin{equation}\
  \int_\dD \frac{1}{\sigma(x,y)} \left( \hi^0(x)-\he^0(x) \right) \,\d s(y) \,=\, \io\! \int_\D \mu(x,y) \hi^0(x) \,\d A(y)\,.
\end{equation}

The question now arises as to which value of $h$ should be recognized as the appropriate macroscopic magnetic field in the homogenized bulk medium.  The model problem of scattering by a slab that we have chosen represents a practical class of problems in which the boundary of the composite cuts through the dielectric matrix only so that the boundary conditions involve only the exterior $H$ field.  In our case, we have interface conditions between the slab and the surrounding air (continuity of $h$ and $E\dot t$).
Therefore we arrive at effective equations that govern $\he^0(x)$ and the usual micro-cell average $\Eav(x)$ of the $E$ field:
\begin{equation}\label{homogenized0}
  \pair{.}{\gradp \dot \Eav(x) - \ii\omega\mu^*(x) \he^0(x)  \,=\, 0,}
           {\gradp \he^0(x) - \ii\omega\e^*(x)\Eav(x) \,=\, 0.}{.}
  \qquad \text{(homogenized system)}
\end{equation}

This system is evidently of two-dimensional Maxwell form, which may be seen {\it a postiori} as another justification for choosing the exterior value of $h$.  Indeed, if the micro-resonators vary macroscopically, implying that $m(x)$ is not constant, then the system \eqref{homogenized0}, rewritten in terms of $\hi^0$ or some combination of $\he^0$ and $\hi^0$, say $\hav^0(x) = a(x)\he^0(x)$, would contain an extra term $-a(x)^{-1}\gradp a(x) \hav^0(x)$ in the second equation of \eqref{homogenized0}, placing the system outside of the usual Maxwell type.  The reason that this term is not present with the choice $\he^0$ lies in the unit cell problem for the corrector functions.

Because to leading order $E\dot t$ vanishes on the surfaces of the micro-resonators, the cell problem is decoupled into parts exterior and interior to the domain $G$.  It gives the order-$\eta$ corrector $h^1(x,y)$ as well as the leading-order part of the $E$ field, $E^0(x,y)$.  In its weak form, it reads
\begin{equation}    \label{cellext}
  \pair{.}{\displaystyle\int_\Dc \Ee^0(x,y) \dot \gradp v(y) \,\d A(y) = 0
       \quad \text{for all} \; v\in \Honeper(\Dc),\vspace{1ex}}
      {\Ee^0(x,y) = \frac{1}{\io} \e(x,y)^{-1} \left( \gradp \he^0(x) + \gradpy \he^1(x,y) \right),}
      {.}   \qquad  \text{(exterior cell problem)}
  \end{equation}
in the exterior, which is a periodic-Neumann boundary-value problem for $\he^1(x,y)$ and
\begin{equation}  \label{cellint}
  \pair{.}{\displaystyle\int_\D \Ei^0(x,y) \dot \gradp v(y) \,\d A(y) = 0
       \quad \text{for all} \; v\in H^1(\D),\vspace{1ex}}
      {\Ei^0(x,y) = \frac{1}{\io} \e(x,y)^{-1} \left( \gradp \hi^0(x) + \gradpy \hi^1(x,y) \right),}
      {.}  \qquad  \text{(interior cell problem)}
  \end{equation}
in the interior, which is a Neumann problem for $\hi^1(x,y)$.  The interior cell problem has a simple explicit solution:
\begin{equation}
  \pair{.}{\gradpy \hi^1(x,y) = -\gradp\hi^0(x),}
             {\Ei^0(x,y) = 0.}
     {.} \qquad \text{(interior)}
\end{equation}

The second equation of \eqref{homogenized0} relates $\he^0$ to $\Eav(x) = \int_\QQ E^0(x,y) \d A(y)$.  It is obtained by averaging $E^0(x,y)$ over $\QQ$, which is equivalent to integrating $\Ee^0(x,y)$ over $\Dc$ (since $\Ei^0=0$), 
and using the definition in the second equation in \eqref{cellext}.  Thus $\Eav(x)$ is a linear function of $\gradp \he^0(x)$:
\begin{equation}\label{Eav0}
  \Eav(x) = \frac{1}{\io} \int_\Dc \e(x,y)^{-1}\left( \gradp \hi^0(x) + \gradpy \hi^1(x,y) \right)\d A(x)
     = \frac{1}{\io} \e^*(x)^{-1} \gradp\he^0(x).
\end{equation}
This expression defines $\e^*(x)$ in the usual way as an effective tensor for a periodic medium with inclusions that are perfect conductors, inside of which the electric field vanishes. 
In Section \ref{discussion}, we argue that $\e^*$ is an effective dielectric permittivity in the sense that $\Dav = \e^*\Eav$, where the average electric displacement field $\Dav$ is defined in the appropriate way.

%
%

The effective magnetic permeability $\mu^*(x)$ is defined by
\begin{equation}
  \mu^*(x) = \int_\QQ M(x,y) \mu(x,y) \d A(y),
\end{equation}
and arises upon integrating over $\QQ$ the equation
\begin{equation}
  \gradpx\dot E^0(x,y) + \gradpy\dot E^1(x,y) - \io\,\mu(x,y) h^0(x,y) = 0, \quad \text{for $y\notin\dD$},
\end{equation}
where $E^1$ is the order-$\eta$ corrector to $E^0$.

The noteworthy feature in this definition is that, even if $\mu$ is real (and, say, constant throughout $\QQ$), $\mu^*$ will still be complex-valued.  This is because the usual cell average is used in averaging the $B$ field $\mu h$ whereas the average of $h$ is taken exterior   to the micro-structure.
The physical correctness of this method of averaging was noticed by Pendry, {\em et.\,al.}.  In section \ref{discussion}, we give a discussion of the averages of the four fields $E$, $h$, $D$, and $b$ and how they emerge naturally in the effective equations.

Finally, we come to the specific problem of the scattering of plane waves by a slab of the composite medium.  Of course, one expects that in the quasi-static limit, the reflection and transmission coefficients will approach those of a slab of a homogeneous medium possessing the effective coefficients $\e^*$ and $\mu^*$.  In fact, in practice one would like to be able to {\em deduce} these bulk properties from scattering measurements.  This has been the subject of many works in the literature, for example those of O'Brien and Pendry \cite{OBrienPendry2002,OBrienPendry2002a} and Smith, {\it et.\,al.}, \cite{SmithSchultzMarko2002}, in which the scattering data obtained from a transfer matrix method for a periodic slab are compared to those obtained from expressions for the effective tensors of the corresponding homogeneous medium.  In the present work, this agreement is established in the quasi-static limit by demonstrating that the limiting values of the electric and magnetic fields to the left and right of the slab are precisely those associated with scattering by a homogeneous slab.  This result is embodied in the statement that the limiting electromagnetic fields outside the slab connect continuously to the fields $\Eav$ and $\he^0$ within the slab, where they solve the effective equations~\eqref{homogenized0}.

\section{Scattering by a slab with micro-resonators}  

As our model problem, we choose the scattering of plane waves by a micro-structured slab of a composite material formed by periodically embedding metal micro-resonators in a dielectric matrix.  The width of the slab is fixed, as the size of a cell tends to zero.  We allow the material properties as well as the shape of the micro-resonators to vary on the macroscopic scale, but require that this variation be periodic with period $2\pi$ in the $x_2$ direction, parallel to the slab.  The fields will then be allowed to be pseudo-periodic in $x_2$.

As we have discussed, an important feature of the model is the requirement that, at the slab-air interface, only the dielectric matrix have contact with the air.

\subsection{Reduced Maxwell equations}

Due to the macroscopic periodicity of the structure and the pseudo-periodicity of the fields, the analysis of the scattering problem can be restricted to the strip
\begin{equation}
  \strip := \left\{ (x_1,x_2) : 0\leq x_2 \leq 2\pi \right\}.
\end{equation}
As illustrated in Figure \ref{fig:strip}, a period of the slab occupies a fixed subdomain $\Omega_0$ of $\strip$, bounded below and above by the lines $x_2=0$ and $x_2=2\pi$ and on the sides by $\Sigma_-$ (at $x_1=a$) and $\Sigma_+$ (at $x_1=b$).  The union of the regions enclosed by the micro-resonators contained in $\Omega_0$ is called the interior domain $\Oi$, and its boundary $\dOi$ is identified with the resonators themselves, or, perhaps more correctly, the surfaces of the resonators.

Because we consider electromagnetic fields in a linear medium, we may treat harmonic fields at fixed frequency $\omega$,
\begin{equation}
  \boldsymbol{\cal E}(x_1,x_2,x_3,t) = \EE(x_1,x_2,x_3)e^{-\io t}, \quad
  \boldsymbol{\cal H}(x_1,x_2,x_3,t) = \HH(x_1,x_2,x_3)e^{-\io t},
\end{equation}
which, when inserted into the time-dependent Maxwell system, yield the harmonic Maxwell system for the spatial envelopes of the fields,
\begin{equation}
  \pair{.}{\curl \EE - \io\, \mu \HH = 0,}
          {\curl \HH + \io\, \e \EE - \JJ = 0.}{.}
\end{equation}
In this work, we study two-dimensional structures, in which the medium and the fields are invariant in one spatial dimension.  In this situation, the Maxwell equations decouple into $E$-polarized and $H$-polarized fields; we study the latter:
\begin{equation}
  \triple{.}{\HH(x_1,x_2,x_3) = \langle 0,0,h(x_1,x_2) \rangle,}
            {\EE(x_1,x_2,x_3) = \langle E_1(x_1,x_2),E_2(x_1,x_2),0 \rangle,
                \; E(x_1,x_2) = \langle E_1(x_1,x_2),E_2(x_1,x_2) \rangle,}
            {\JJ(x_1,x_2,x_3) = \langle J_1(x_1,x_2),J_2(x_1,x_2),0 \rangle,
                \; J(x_1,x_2) = \langle J_1(x_1,x_2),J_2(x_1,x_2) \rangle.}{.}
\end{equation}
With the notation
\begin{equation} 
\gradp := \kx \grad
         = \left\langle -\frac{\partial}{\partial {x_2}}, \frac{\partial}{\partial {x_1}} \right\rangle
\end{equation}
(the formal adjoint of $\gradp\cdot$\, is $-\gradp$), the curls of $\EE$ and $\HH$ are expressed as
\begin{equation}
  \pair{.}{\curl\EE = \gradp\dot E\,,}
          {\curl\HH = -\gradp h\,,}{.}
\end{equation}
and the Maxwell system reduces to
\begin{equation}
  \pair{.}{\gradp\dot E - \io\,\mu h = 0,}
          {\gradp h - \io\,\e E + J = 0.}{.}
\end{equation}
The relevant fundamental theorem involving the operator $\gradp\cdot\,$ is
\begin{equation}
\int_R \gradp\dot F \,\d A = \int_{\partial R} F\dot t \,\d s \,,
\end{equation}
in which $F$ is a vector field, $R$ is a bounded domain in $\mathbb{R}^2$ with boundary $\partial R$, and $t$ is the counter-clockwise unit tangent vector to $\partial R$.

We incorporate the regular part of the current $J$ into the electric displacement field $D=\e E$ by allowing $\e$ to be complex\footnote 
{With $\e = \e' + \ii\e''$, the conductivity in the material away from the metal rings is $\sigma := \omega\e''$.  In $-\io\,\e E = -\io\,\e' E + \sigma E$, the term $-\io\,\e' E$ is identified with the time derivative of the electric displacement and $\sigma E$ with the current density $J$ off the surface of the resonator. 
},
and retain only the singular part of $J$ in the equation.  As explained in section \ref{subsec:model}, our resonators are characterized by a complex surface conductivity $\sigmas$ relating the surface current to the electric field on the boundary of each resonator:
\begin{equation}\label{current}
  J = \sigmas (E\dot \tt)\delta_\dOi \tt\,.
\end{equation}
%

Fixing a Bloch wave vector $\kappa$ in the $x_2$ direction in the first Brillouin zone, $\kappa \in [-1/2,1/2)$, the scattered magnetic field $h_\mathrm{sc}$ (total field minus incident field) has the outgoing form
\begin{equation}\label{outgoing}
  h_\mathrm{sc}(x_1,x_2) = \sum_{m=-\infty}^{\infty} c_m^\pm e^{\ii((m+\kappa) x_2 + \nu_m |x_1|)}, 
  \quad \text{$x_1<a$ or $x_1>b$},
\end{equation}
in which the exponents $\nu_m$, which depend on $\omega$ and $\kappa$, are defined by
\begin{equation}
  \nu_m^2 + (m+\kappa)^2 - \e_0\mu_0 \omega^2 = 0
\end{equation}
and the convention that $\nu_m > 0$ if $\nu_m^2 > 0$ and $\ii\nu_m < 0$ if $\nu_m^2 < 0$.
The incident magnetic field is
\begin{equation}
  \hinc(x_1,x_2) = e^{\ii(\bar m+\kappa)x_2} e^{\ii\nu_{\bar m} x_1},
\end{equation}
in which $\nu_{\bar m} > 0$.

\subsection{Microstructure}

The micro-structure at each value of $x$ is described by means of a microscopic variable $y\in\mathbb{R}^2$  and its fundamental period cube $\QQ=[0,1]^2$ (Fig. \ref{fig:cell}).
The boundaries of the micro-resonators are defined through
a microscopic domain $\D$ in $\QQ$ with boundary $\dD$ and exterior domain $\Dc = \QQ\setminus\D$ and three complex-valued functions of both macroscopic and microscopic variables that have period $\QQ$ in $y$.  The domain $\D$ is assumed to be simply connected and contained wholly within the unit cell $\QQ$.  Our analysis could be extended to the case in which $\D$ consists of more than one simply connected component or in which the micro-resonators cannot be modeled by a domain contained wholly within the unit cell.  However, our model seems to include all of the split-ring resonators we have seen in the literature.

The dielectric permeability $\e(x,y)$ is a tensor, whereas the magnetic permeability $\mu(x,y)$ and the conductivity $\sigma(x,y)$ are scalars.  The tensor $\e$, considered as a matrix, is symmetric, the real parts of all three quantities are positive and bounded from above and below, and their imaginary parts are semidefinite:
\begin{equation}
\triple{.}{0 < \e_- \leq \Re\e(x,y)\xxi \dot \xxi \leq \e_+, \quad 0\leq \Im\e(x,y)\xxi\dot\xxi, \quad x\in\Omega_0,\, y\in\QQ, \, \xxi \in \mathbb{R}^2,}
              {0 < \mu_- \leq \Re\mu(x,y) \leq \mu_+, \quad 0\leq \Im\mu(x,y), \quad x\in\Omega_0,\, y\in\QQ,}
              {0 < \sigma_{\!-} \leq \Re\sigma(x,y) \leq \sigma_+, \quad \Im\sigma(x,y) \leq 0, \quad x\in\Omega_0,\, y\in\dD.}
              {.}
\end{equation}
Both $\e$ and $\mu$ are assumed to be continuous in $x$, and continuous in $y$ off the boundary $\dD$.   Thus we allow different values in the interior and exterior of the resonators.  To define the actual micro-structure at a fixed fine scale $\eta$, we set
\begin{equation}
\triple{.}{\e^\eta(x) = \e(x,x/\eta), \; x\in\Omega_0,}
           {\mu^\eta(x) = \mu(x,x/\eta), \; x\in\Omega_0,}
           {\sigma^\eta(x) = \sigma(x,x/\eta), \; x\in\dOi.}
           {.}
\end{equation}
As we wish to allow the conductivity of the metal cylinders to tend to infinity with $1/\eta$, we take
\begin{equation}
\sigmas = \frac{1}{\eta}\,\sigma^\eta(x)
\end{equation}
as the surface conductivity in equation \eqref{current}.  Outside the slab, we take $\e$ and $\mu$ to be constant:
\begin{equation}
 \e^\eta(x) = \e_0 \quad \text{and} \quad \mu^\eta(x) = \mu_0
 \qquad \text{for $x\in\Omega\!\setminus\!\Omega_0$}.
\end{equation}
The interior domain $\Oi$ depends on $\eta$ and is expressed in terms of the domains $\D$ through
\begin{equation}
\Oieta = \{ x\in\Omega_0 : x/\eta \in \D + n \text{ for some $n\in\mathbb{Z}^2$}\}.
\end{equation}

We have discussed the restriction that the edges of the slab not intersect the resonators.  It is convenient, however, to assume a bit more: that the width of the slab encompass an integer number of period cells.  Therefore we require $\eta = (b-a)/n$ for some integer $n$.  Since we also assume that the structure is $2\pi$-periodic in the slow variable $x_2$, we also require that $\eta = 2\pi/m$ for some integer $m$.  These conditions are equivalent to the condition that $b-a$ is rationally related to $\pi$.
The set of permissible values of $\eta$ is denoted by $\Upsilon$:
\begin{equation}
\eta \in \Upsilon.
\end{equation}

For each $\eta$, we let $E^\eta$, $h^\eta$ denote a solution to the scattering problem, which is posed in its strong PDE form as follows.  The subscripts $e$ and $i$ refer to exterior and interior values of functions.

\begin{problem}[Scattering by a slab, strong form] 
\label{strong}
Find a function $h^\eta$ and a vector field $E^\eta$ in the strip $\strip$ such that
\begin{eqnarray}
&& \pair{.}{\gradp\dot E^\eta - \io\,\mu^\eta h^\eta = 0}
                  {\gradp h^\eta - \io\,\e^\eta E^\eta = 0}{\}} \; \text{on $\strip\setminus\dOieta$}, \\
&& E^\eta\dot \tt \; \text{continuous across $\dOieta$}, \\
&& \he^\eta - \hi^\eta + \frac{\sigma^\eta}{\eta} E^\eta\dot \tt = 0 \; \text{on $\dOieta$}, \\
&& h^\eta(x_1,2\pi) = e^{\ik x_2} h^\eta(x_1,0) \; \text{ and } \; \partial_{x_2} h^\eta(x_1,2\pi) = e^{\ik x_2} \partial_{x_2} h^\eta(x_1,0), \\
&& h^\eta(x) = \hinc(x_1,x_2) + \sum_{m=-\infty}^{\infty} a_m e^{\ii((m+\kappa) x_2-\nu_m x_1)} \; \text{for $x_1<a$}, \\
&& h^\eta(x) = \sum_{m=-\infty}^{\infty} b_m e^{\ii((m+\kappa) x_2 + \nu_m x_1)} \; \text{for $x_1>b$}.
\end{eqnarray}
\end{problem}
Equivalently, one could pose the PDE as a divergence-form elliptic operator on $h^\eta$ alone and use the second Maxwell equation as the definition of $E^\eta$:
\begin{equation}
  \pair{.}{\gradp\dot((\e^\eta)^{-1}\gradp h^\eta) + \omega^2\,\mu^\eta h^\eta = 0}
          {E^\eta = \frac{1}{\io}\,(\e^\eta)^{-1}\gradp h^\eta}{\}}  \; \text{in $\strip\setminus\dOieta$}.
\end{equation}
The first equation reduces to $\grad\dot((\e^\eta)^{-1}\grad h^\eta) + \omega^2\,\mu^\eta h^\eta = 0$ if $\e$ is scalar, and conditions involving $E^\eta\dot t$ are parsed in terms of $h^\eta$ through $E^\eta\dot \tt = \frac{1}{\io} (\kx \e^{-1}\gradp h^\eta) \dot \nn$\,.

\section{Formal expansion analysis} 
\label{sec:expansion}

In the formal asymptotic analysis of the strong form system, we assume that $h^\eta$ and $E^\eta$ have expansions
\begin{eqnarray}
  && h^\eta(x) = h^0(x,x/\eta) + \eta h^1(x,x/\eta) + \bigo(\eta^2), \\
  && E^\eta(x) = E^0(x,x/\eta) + \eta E^1(x,x/\eta) + \bigo(\eta^2),
\end{eqnarray}
and that these expansions can be differentiated term by term.  By inserting these into the scattering Problem \ref{strong}, the results announced in Section \ref{sec:overview} emerge.

\subsection{Expansion of the Maxwell system}

The first of the Maxwell equations, $\gradp\dot E^\eta - \io\,\mu^\eta h^\eta = 0$, gives
\begin{equation*}
  {\eta}^{-1} \gradpy\dot E^0(x,y) + (\gradpx\dot E^0(x,y) + \gradpy\dot E^1(x,y)) - \io\,\mu(x,y) h^0(x,y) = \bigo(\eta),
\end{equation*}
which yields the equations
\begin{equation}\label{expansion1}
  \pair{.}{\gradpy\dot E^0(x,y) = 0,}
           {\gradpx\dot E^0(x,y) + \gradpy\dot E^1(x,y) - \io\,\mu(x,y) h^0(x,y) = 0,}{\}} \quad \text{for $y\notin\dD$}.
\end{equation}
Similarly, the second of the Maxwell equations, $\gradp h^\eta - \io\,\e E = 0$, gives
\begin{equation*}
  {\eta}^{-1} \gradpy h^0(x,y) + (\gradpx h^0(x,y) + \gradpy h^1(x,y)) - \io \e(x,y) E^0(x,y) = \bigo(\eta),
\end{equation*}
which yields the equations
\begin{equation}\label{expansion2}
  \pair{.}{\gradpy h^0(x,y) = 0,}
           {\gradpx h^0(x,y) + \gradpy h^1(x,y) - \io\,\e(x,y) E^0(x,y) = 0,}{\}} \quad \text{for $y\notin\dD$}.
\end{equation}
The interface conditions tell us that
\begin{eqnarray*}
  && E^0\dot t \; \text{ and } \; E^1\dot t \; \text{ are continuous on $\dD$}, \\
  && \pair{.}{E^0\dot t = 0,}
                 {\he^0 - \hi^0 + \sigma E^1\dot t = 0,}{\}} \; \text{on $\dD$}.
\end{eqnarray*}
Let a micro-cell $(\QQ+m)/\eta$ with $m\in\mathbb{Z}^2$ depending on $\eta$ be chosen such that it contains a fixed point $x$, and set $\hat x = m\eta$.  If we integrate in $x$ over the part of $\Oi$ and its boundary contained in this micro-cell, we obtain, after making the change of variable $x'=\hat x + \eta y$,
\begin{equation}
  \eta\int_\dD \frac{\eta}{\sigma} (\he^\eta - \hi^\eta) \d s(y)
     = -\eta \!\int_\dD \!\!E^\eta\cdot t \, \,\d s(y)
  = -\eta^2\!\int_\D \!\! (\gradp\dot E^\eta)  \,\d A(y)
   = -\io\,\eta^2 \!\int_\D \!\! \mu h^\eta  \,\d A(y),
\end{equation}
in which the functions of $x'$ and $y$ are evaluated at $(\hat x + \eta y, y)$.  The expansions of $E^\eta$ and $h^\eta$ then yield the equations
\begin{equation}\label{EMF}
  \int_\dD \sigma^{-1}(x,y) (\he^n(x,y)-\hi^n(x,y))\,\d s(y)
    = -\io\,\!\! \int_\D \! \mu(x,y) \hi^n(x,y)\,\d A(y)
\end{equation}
at each order $n=0,1,2,$ {\it etc}.

\subsection{The cell problem and the homogenized system}
\label{subsec:cell}

From the first of the pair \eqref{expansion2}, $\gradpy h^0(x,y) = 0$, which is valid for each $x\in\Omega$ and for $y\not\in \dD$, we infer exterior and interior values of the magnetic field that are independent of $y$:
\begin{equation}
  h^0(x,y) = \pair{\{}{\he^0(x), \; y\in \Dc,}{\hi^0(x), \; y\in \D.}{.}
\end{equation}
The relation between the exterior and interior values is obtained from \eqref{EMF}, which expresses a balance of electromotive forces on the boundary of a single inclusion,
\begin{equation}
\hat\rho(x) (\he^0(x) - \hi^0(x)) + \io \hat\mu(x) \hi^0(x) = 0,
\end{equation}
in which
\begin{equation} \label{one}
  \hat\mu(x) = \int_\D \mu(x,y) \d A(y), \qquad
  \hat\rho(x) = \int_\dD \sigma(x,y)^{-1} \d s(y),
\end{equation}
and we obtain $\hi(x) = m(x)\he(x)$ and thus an expression for the magnetic field in a cell in terms of its value exterior to the inclusion,
\begin{equation}
h^0(x,y) = M(x,y) \he^0(x),
\end{equation}
in which
\begin{equation} \label{two}
  M(x,y) = \pair{\{}{1, \quad y\in \Dc,}{m(x), \; y \in \D,}{.} \qquad
  m(x) = \frac{\hat\rho(x)}{\hat\rho(x) - \io\hat\mu(x)}.
\end{equation}

We next use the first equation of \eqref{expansion1} and the second of \eqref{expansion2}, together with the continuity of $E\dot t$ on the interfaces to obtain the {\em unit cell problem} 
\begin{eqnarray}
  && \gradpy\dot E^0(x,y) = 0, \label{cell1} \\
  && \gradpx h^0(x,y) + \gradpy h^1(x,y) - \io \e(x,y) E^0(x,y) = 0 \quad \text{off $\dD$}, \label{cell2} \\
  && \Ee^0(x,y)\cdot t = \Ei^0(x,y)\cdot t = 0 \quad \text{on $\dD$}, \label{cell3} \\
  && h^1(x,y) \;\text{ and }\; E^0(x,y) \;\text{ periodic in $y$}.
\end{eqnarray}
This problem determines $h^1$ and $E^0$ as functions of $y$, for each value of $x$.  It is an inhomogeneous periodic-Neumann problem with input $\gradx h^0$, in which the exterior and interior first-order corrections, $\he^1$ and $\hi^1$ are decoupled due to the homogeneous boundary condition on the interface $\dD$ and are determined up to two additive constants in $y$, which are functions of $x$.  One relation between these functions of $x$ is provided again by \eqref{EMF}:
\begin{equation} \label{EMF2}
  \int_\dD \sigma^{-1}(x,y) (\he^1(x,y)-\hi^1(x,y))\,\d s(y)
    = -\io\,\!\! \int_\D \! \mu(x,y) \hi^1(x,y)\,\d A(y).
\end{equation}
As we have mentioned in Section \ref{sec:overview}, the interior value of $E^0$ vanishes identically.

To obtain a PDE governing the average exterior $H$ field $\he^0$ and the average $E$ field $\Eav$, we integrate the second equation of the pair \eqref{expansion1} over the unit cell $\QQ$, as well as the second equation of \eqref{expansion2} after applying $\e(x,y)^{-1}$.  The result is the homogenized system
\begin{equation}\label{homogenized}
  \pair{.}{\gradp\dot \Eav(x) - \io\mu^*(x) \he^0(x) \,=\, 0,}
           {\gradp \he^0(x) - \io\e^*(x)\Eav(x) \,=\, 0.}{.}
\end{equation}
The effective magnetic permeability tensor $\mu^*$ is given by
\begin{equation} \label{three}
  \mu^*(x) = \int_\QQ M(x,y) \mu(x,y) \d A(y),
\end{equation}
and the effective dielectric tensor $\e^*$ arises as we have explained in Section \ref{sec:overview}:
%
\begin{multline}\label{Eav}
\Eav(x) = \int_\QQ E^0(x,y) \,\d A(y)  \\
   = \frac{1}{\io} \int_\Dc \e(x,y)^{-1} \left( \gradp \he^0(x) + \gradpy \he^1(x,y) \right) \d A(y)
   = \frac{1}{\io} \e^*(x)^{-1} \gradp \he^0(x).
\end{multline}

In order to justify calling $\e^*$ an effective dielectric permittivity, we must ensure that $\e^* E$ is correctly interpreted as an appropriate cell average of the electric displacement field $D$.  We address this issue in section \ref{discussion}.

\subsection{Corrector functions}

Because of the condition \eqref{cell3} that $\Ee^0(x,y)\cdot t = \Ei^0(x,y)\cdot t = 0$ on $\dD$, the interior and exterior gradients of the corrector function $h^1(x,y)$ are decoupled and are determined as linear functions of $\grad\he^0(x)$ and $\grad\hi^0(x)$ through the cell problem restricted to $\Dc$ and $\D$,
\begin{eqnarray}
 && \gradpy\he^1(x,y) = \Pe(x,y) \gradp\he^0(x) \quad \text{in $\Dc$},\\
 && \gradpy\hi^1(x,y) = \Pi(x,y) \gradp\hi^0(x) \quad \text{in $\D$}.
\end{eqnarray}
The corrector matrices $\Pe$ (resp. $\Pi$) is defined by setting $\Pe\xxi$ (resp. $\Pi\xxi$) equal to the unique solution of the cell problem in which $\xxi$ stands for $\gradp\he^0(x)$ (resp. $\grad\hi^0(x)$).  Posed in their weak forms, these problems are
\begin{eqnarray}
&& \int_\Dc \e(x,y)^{-1}  \label{Pe}
    \left[ \xxi + \Pe(x,y) \xxi \right] \dot \gradp v(y) \d A(y) = 0
\quad \text{for all } v\in \Honeper(\Dc), \\
&& \int_\D \e(x,y)^{-1}    \label{Pi}
    \left[ \xxi + \Pi(x,y) \xxi \right] \dot \gradp v(y) \d A(y) = 0
\quad \text{for all } v\in H^1(\D).
\end{eqnarray}
As we have seen, $\Pi$ admits a very simple form:
\begin{equation}
  \Pi(x,y) \xi  =  -\xi.
\end{equation}
The tensor $\e^*(x)$ is defined through
\begin{equation}\label{alpha}
\e^*(x)^{-1}\xxi = \int_\Dc \e(x,y)^{-1}
    \left[ \xxi + P(x,y) \xxi \right] \d A(y).
\end{equation}

The term of inhomogeneity in the cell problem, $\gradpx h^0(x,y)$, might as well be parsed in terms of the average exterior $H$ field $\he^0$ and its gradient,
\begin{equation}
\gradpx h^0(x,y) = \gradpx \left( M(x,y)\he^0(x) \right) =
  \pair{\{}{\gradp \he^0(x), \hspace{10em} y\in \D,}
               {m(x)\gradp \he^0(x) + \gradp m(x) \he^0(x), \;\; y\in \Dc.}{.}
\end{equation}
From this point of view, the corrector function $h^1(x,y)$ is a linear function of $\gradp \he^0(x)$ and $\he^0(x)$ for any fixed value of $x$, so that it is determined in two parts, given by a corrector matrix $\tilde P(x,y)$ applied to the gradient $\gradp \he^0(x)$ and a corrector vector $\tilde Q(x,y)$, which multiplies the scalar $\he^0(x)$:
\begin{equation}
\gradpy h^1(x,y) = \tilde P(x,y) \gradp \he^0(x) + \tilde Q(x,y) \he^0(x),
\end{equation}
and one computes that these are related to $\Pe$ and $\Pi$ by
\begin{eqnarray}
  && \tilde P(x,y) = \chie(y) \Pe(x,y) - \chii(y) m(x), \\
  && \tilde Q(x,y) = \chii(y) \gradp m(x).
\end{eqnarray}

\subsection{Discussion of average fields}\label{discussion}

The homogenized system \eqref{homogenized} is clearly of Maxwell form, in which $\Eav(x)$ and $\he^0(x)$ play the role of electric and magnetic fields in a bulk two-dimensional medium.  These fields are certain micro-cell averages of the actual electromagnetic fields in the micro-scale composite, $\Eav(x)$ being the usual cell average and $\he^0(x)$ being the average over the part of the cell exterior to the micro-resonator.  We expect additionally that the fields $\e^*(x)\Eav(x)$ and $\mu^*(x)\he^0(x)$ should represent suitable cell averages of the $D$ field $\e E$ and the $B$ field $\langle 0,0,\mu h \rangle$.

In the medium exterior to the slab (air, for example) all of these fields reduce to $E$, $h$, $D=\e E$, and $b=\mu h$.  In the composite, one has be careful about how cell averages are to be understood.  The standard point of view in the physics literature (see, for example, \cite{PendryHoldenRobbins1999} or \cite{Ramakrish2005}) is based on the fact
one should keep in mind the different geometrical roles of these fields:  Averages of the $E$- and $H$ fields should be taken over line segments traversing a micro-period in the direction of each field component separately because these fields are naturally integrated over one-dimensional curves (they are fundamentally one-forms).  On the other hand, averages of the $D$- and $B$ fields should be taken over sides of the unit cube perpendicular to each component separately because these fields are naturally integrated over surfaces (they are fundamentally two-forms).

In the two-dimensional $H$-polarized reduction, this scheme amounts to computing averages in the following way.  Since $H$ is out of plane, its average is taken simply by evaluating $h$ at a suitable point, whereas the average of $b$ must be taken over the entire unit cell.  The $x_i$-component of the $E$ field should be averaged over line segment in the $x_i$-direction, whereas the average of the $x_i$-component of the $D$ field reduces to the average over a line segment in the $x_{i'}$-direction, where $i' = (i+1)(\mathrm{mod}2)$, because $D$ is constant in the out-of-plane direction.

We now demonstrate that, from this point of view, the tensors $\e^*(x)$ and $\mu^*(x)$ that emerge in the homogenized equations do indeed relate the natural cell averages of $E$ and $D$ to each other as well as those of $h$ and $b$.

In defining $\hav$, a choice between its evaluation in the exterior or the interior of the resonator needs to be made.  In problems of scattering by a slab, one avoids cutting through a micro-resonator and therefore arranges the slab-air interface such that the air is incident with the dielectric matrix of the composite that is exterior to the resonators.  The exterior $H$ field is therefore the one that connects continuously with the $H$ field in the air, and therefore we should work with the average of $h$ in the exterior:
\begin{equation}
  \hav(x) := \he^0(x).
\end{equation}
Evidently, since $h$ is constant in $y$ to leading order exterior to the resonator and constant in the out-of-plane direction, no averaging of oscillations is truly taking place in this definition.

The $B$ field, which is identified with its out-of-plane component $b \!=\! \mu h$, should be averaged over the cell $\QQ$:
\begin{equation}
  \bav(x) := \int_\QQ \mu(x,y) h^0(x,y) \,\d A(y),
\end{equation}
which is exactly equal to $(\io)^{-1}$ times the second term in the right-hand side of the first equation in \eqref{homogenized}, and we have therefore
\begin{equation}
  \bav(x) = \mu^*(x) \hav(x).
\end{equation}

Now let us examine the $E$ and $D$ fields.
Let $e_1$ and $e_2$ denote the vectors $\langle 1,0 \rangle$ and $\langle 0,1 \rangle$, and set $i' = (i+1)(\mathrm{mod}2)$.  The cell average of the $e_i$ component of the $E$ field should be taken over a line segment in the direction of $e_i$ traversing a period cell,
\begin{equation}
\Eav(x) \dot e_i := \int\limits_{\makebox[2.2em][b]{\parbox{2.2em}{\scriptsize $y_i=0$ \\ $y_{i'}=c$}}}^1
E^0(x,y)\dot e_i \,\d y_i.
\end{equation}
These integrals are independent of $y_{i'}$ because $\gradpy\dot E^0(x,y) = 0$, and therefore by this definition, $\Eav$ can be taken to be the area integral of $E^0(x,y)$ over $\QQ$, as we have defined it above in \eqref{Eav}.

The $e_i$ component of the $D$ field $D=\e E$ is averaged over a line segment in the direction of $e_{i'}$,
\begin{equation}
  \Dav(x) \dot e_i := \int\limits_{\makebox[2.2em][b]{\parbox{2.2em}{\scriptsize $y_{i'}=0$ \\ $y_{i}=c$}}}^1
  \e(x,y) E^0(x,y) \dot e_i \,\d y_{i'}.
\end{equation}
If the path of integration remains exterior to $\D$, this yields
\begin{equation}
  \Dav(x) \dot e_i = \frac{1}{\io}
  \int\limits_{\makebox[2.2em][b]{\parbox{2.2em}{\scriptsize $y_{i'}=0$ \\ $y_{i}=c$}}}^1
  \left( \gradpx \he^0(x) + \gradpy h^1(x,y) \right) \dot e_i \,\d y_{i'}
  = \frac{1}{\io} \gradpx \he^0(x)\dot e_i \,,
\end{equation}
and therefore, with this definition of $\Dav$, we obtain from the homogenized equation \eqref{homogenized},
\begin{equation}
  \Dav(x) = \e^*(x) \Eav(x).
\end{equation}

\section{Two-scale convergence analysis} 
\label{sec:twoscale}

This Section establishes with mathematical rigor the results of the formal analysis.  Our main result is that the solution of the problem of scattering by the micro-structured slab tends to the solution of the problem of scattering by a homogeneous slab.  This amounts to convergence of the electromagnetic fields to the average fields discussed in the previous section and the fact that these average fields satisfy the effective system \eqref{homogenized} in the slab.  Of course we must also show that the field $\hav = \he^0$ connects continuously to the function $h$ in the air and that the tangential component of $\Eav$ at the edge of the slab connects continuously to that of $E$ from the side of the air.

These things are properly handled by means of the weak form of the scattering problem.  The outgoing conditions for the scattered field are most conveniently expressed by employing the Dirichlet-to-Neumann map $T$ on the sides $\Gamma_\pm$ for outgoing fields.  This means that fields of the form \eqref{outgoing} are characterized equivalently by
\begin{equation}
\frac{\partial h_\mathrm{sc}}{\partial n}|_{\Gamma_\pm} = 
   -T \left( h_\mathrm{sc}|_{\Gamma_\pm} \right).
\end{equation}
For $g\in H^{\half}(\Gamma_\pm)$, and $\{\hat g^\kappa_m\}$ denoting the Fourier transform of $ge^{-\ik x_2}$, this map is defined through
\begin{equation}
(\widehat{Tg})^\kappa_m = -\ii\nu_m \hat g^\kappa_m\,.
\end{equation}
$T$ is a bounded operator from $H^{\half}(\Gamma_\pm)$ to $H^{-\half}(\Gamma_\pm)$, with norm $\| T \|_{H^{1/2}\to H^{-1/2}} = C_1 + C_2|\omega|$ for some positive constants $C_1$ and $C_2$.  For real values of $\omega$ and $\kappa$, it has a finite-dimensional negative imaginary subspace spanned by the functions
$e^{\ii (m+\kappa) x_2}$ for which $\nu_m^2 > 0$ and a positive subspace equal to the space orthogonal to the imaginary one.  We denote the corresponding decomposition of $T$ by $T = T^+ + \ii T^-$.

Denote by $\Honek(\Oeeta)$ the $\kappa$-pseudoperiodic subspace of $H^1(\Oeeta)$:
\begin{equation}
\Honek(\Oeeta) = \left\{ u\in H^1(\Oeeta) : u|_{x_2=2\pi} = e^{2\pi \ii\kappa} u|_{x_2=0} \right\}.
\end{equation}
It is natural to identify $\Honek(\Oeeta)$ with a subspace of $L^2(\Omega)$ by extension by zero into $\Oieta$, and $H^1(\Oieta)$ with a subspace of $L^2(\Omega)$ by extension by zero into $\Oeeta$.  We can then define the space $V^\eta$ in which $h^\eta$ resides by
\begin{equation}
\Honek(\Oeeta)\oplus H^1(\Oieta) =: V^\eta \hookrightarrow L^2(\Omega) \hookrightarrow (V^\eta)^* \cong V^\eta.
\end{equation}
It will be convenient also to define $V^0 := \Honek(\Omega)$.

The following weak form of the scattering problem is equivalent to the strong form for smooth $\e$, $\mu$, and $\sigma$ and for functions $\he$ and $\hi$ that are smooth in $\overline\Oe$ and $\overline\Oi$\,.

\begin{problem}[Scattering by a slab, weak form] 
\label{microweak}
Find a function $h^\eta_\omega = h^\eta = \he^\eta\oplus\hi^\eta \in V^\eta$ such that
\begin{multline}
\int_\Omega \left[ \left( (\e^\eta)^{-1}\gradp h^\eta \right)\dot\overline{\gradp v} - \omega^2\mu^\eta\, h^\eta \bar v \right] \d A
- \io \int_\dOieta \! \eta (\sigma^\eta)^{-1}(\he^\eta-\hi^\eta)(\bar\ve-\bar\vi) \,\d s \,+ \\
+  \e_0^{-1} \int_{\Gamma_\pm} \!(T \he^\eta)\bar\ve \,\d x_2
= - 2 \ii\nu_{\bar m}\e_0^{-1} \int_{\Gamma_-} \!\! e^{\ii((\bar m+\kappa)x_2 + \nu_{\bar m} x_1)} \bar\ve \,\d x_2 
\end{multline}
for all $v = \ve\oplus\vi \in V^\eta$, and let $E^\eta\in L^2(\Omega)$ be defined by
\begin{equation}
E^\eta = \frac{1}{\io}\,(\e^\eta)^{-1}\gradp h^\eta \quad \text{in $\Oeeta\cup\Oieta$}.
\end{equation}
\end{problem}

This scattering problem always has a solution.  It is possible to prove that on finite intervals of the real $\omega$-axis avoiding  a discrete set of frequencies, $h^\eta_\omega$ is unique for all $\eta$ sufficiently small, but this fact will not needed in our analysis.

\begin{theorem}[Existence of solutions] 
\label{existence}
For each frequency $\omega$ and $\eta\in\Upsilon$, the scattering Problem \ref{microweak} has a solution $h^\eta$.
\end{theorem}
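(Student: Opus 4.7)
The plan is to recast Problem \ref{microweak} as an operator equation, exhibit a Fredholm structure, and conclude solvability by limiting absorption. By Riesz representation on the Hilbert space $V^\eta = \Honek(\Oeeta) \oplus H^1(\Oieta)$, the left-hand sesquilinear form $a^\eta(\cdot,\cdot)$ of Problem \ref{microweak} becomes a bounded operator $A^\eta : V^\eta \to V^\eta$, and the right-hand side an element $f^\eta \in V^\eta$. Boundedness of $A^\eta$ follows from the $L^\infty$ bounds on $\e^\eta$, $\mu^\eta$, and $\eta(\sigma^\eta)^{-1}$, the trace theorem $H^1(\Oeeta) \hookrightarrow H^{1/2}(\dOieta \cup \Gamma_\pm)$ (at fixed $\eta$), and continuity of $T : H^{1/2}(\Gamma_\pm) \to H^{-1/2}(\Gamma_\pm)$.

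Next, decompose $A^\eta = B^\eta + K^\eta$ with
\begin{align*}
b^\eta(h,v)
&= \int_\Omega \bigl[(\e^\eta)^{-1}\gradp h \cdot \overline{\gradp v} + h \bar v\bigr]\,dA - \ii\omega \int_\dOieta \eta (\sigma^\eta)^{-1}(h_e-h_i)(\bar v_e-\bar v_i)\,ds \\
&\quad + \e_0^{-1} \int_{\Gamma_\pm} (T^+ h_e) \bar v_e\, dx_2.
\end{align*}
Taking real parts of $b^\eta(h,h)$ and using that $\Re(\e^\eta)^{-1}$ is bounded below, that $\Re(-\ii\omega (\sigma^\eta)^{-1}) = -\omega \Im \sigma^\eta / |\sigma^\eta|^2 \ge 0$ (from $\Im \sigma \le 0$), and that $T^+$ is nonnegative as a Fourier multiplier on the evanescent subspace, one obtains $\Re b^\eta(h,h) \ge c \Vert h\Vert_{V^\eta}^2$; Lax--Milgram then yields invertibility of $B^\eta$. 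The remainder $K^\eta$ corresponds to $-(1+\omega^2\mu^\eta)\int_\Omega h \bar v\,dA + \ii \e_0^{-1} \int_{\Gamma_\pm}(T^- h_e) \bar v_e\,dx_2$ and is compact: the volume part factors through the compact embedding $V^\eta \hookrightarrow L^2(\Omega)$ (Rellich, at fixed $\eta$), and $T^-$ has finite-dimensional range (the finitely many propagating Fourier modes). Thus $A^\eta$ is Fredholm of index zero.

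To produce a solution, apply limiting absorption: replace $\mu^\eta$ by $\mu^\eta + \ii \delta$ with $\delta > 0$, giving $A^\eta_\delta$. Testing the associated form at $v = h$ and taking imaginary parts yields
\begin{align*}
\Im a^\eta_\delta(h,h)
&= \int_\Omega \Im((\e^\eta)^{-1})|\gradp h|^2\,dA - \omega^2\int_\Omega (\Im \mu^\eta + \delta)|h|^2\,dA \\
&\quad - \omega \int_\dOieta \eta \Re((\sigma^\eta)^{-1})|h_e-h_i|^2\,ds + \e_0^{-1}\Im \int_{\Gamma_\pm} (T h_e)\bar h_e\,dx_2 \\
&\le -\omega^2\delta \Vert h\Vert_{L^2}^2,
\end{align*}
each remaining summand being nonpositive (from $\Im(\e^\eta)^{-1}\le 0$, $\Im\mu^\eta\ge 0$, $\Re(\sigma^\eta)^{-1}>0$, and the outgoing character of $T$ which gives $\Im\int(Th_e)\bar h_e \le 0$). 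Hence $\ker A^\eta_\delta = 0$, $A^\eta_\delta$ is invertible, and $h^\eta_\delta := (A^\eta_\delta)^{-1} f^\eta$ is defined. If $\{\Vert h^\eta_\delta\Vert_{V^\eta}\}_{\delta>0}$ is bounded, a weak subsequential limit $h^\eta$ satisfies $A^\eta h^\eta = f^\eta$. If it is unbounded, normalize $g^\eta_\delta = h^\eta_\delta/\Vert h^\eta_\delta\Vert_{V^\eta}$; coercivity of $B^\eta$ together with compactness of $K^\eta$ yields a strong subsequential $V^\eta$-limit $g^\eta \in \ker A^\eta$ with $\Vert g^\eta\Vert_{V^\eta} = 1$, and subtracting the projection onto $\ker A^\eta$ from $h^\eta_\delta$ gives a bounded sequence whose weak limit solves $A^\eta h^\eta = f^\eta$.

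The main obstacle is this last step when $\ker A^\eta \neq 0$: equivalently, verifying that $f^\eta \in (\ker (A^\eta)^*)^\perp$ so that the Fredholm alternative delivers a solution. This requires combining the symmetry of the bilinear form (from the symmetry of $\e$ and the diagonal Fourier-multiplier structure of $T$) with the imaginary-part dissipation identity in order to identify $\ker(A^\eta)^*$ with incoming-radiation modes, against which the outgoing plane-wave data $f^\eta$ pairs trivially by orthogonality of propagating versus evanescent Fourier components on $\Gamma_\pm$.
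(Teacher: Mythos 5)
Your overall route is the same as the paper's: split the sesquilinear form into a coercive part plus a compact perturbation, conclude that the associated operator is Fredholm of index zero, and then verify the solvability condition $f\perp\ker (A^\eta)^*$ using the sign of the imaginary part of the form. (Your variant of the splitting, which moves the zeroth-order term and $T^-$ into the compact part and keeps only $T^+$ in the coercive part, is if anything slightly more careful than the paper's, which asserts coercivity of $b^\eta_\omega$ with the full $T$ directly.) Two remarks. First, the limiting-absorption detour is superfluous and, as written, does not close the argument: the assertion that subtracting the projection onto $\ker A^\eta$ from $h^\eta_\delta$ yields a bounded sequence is essentially equivalent to the solvability condition you defer to the final paragraph, so nothing is gained by introducing $\delta$; once index zero is established you can discard it.

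Second, the step you flag as ``the main obstacle'' is in fact short, and it is precisely what the paper does; no symmetry of the form is needed. If $v$ satisfies $a^\eta_\omega(w,v)=0$ for all $w\in V^\eta$, take $w=v$ to get $a^\eta_\omega(v,v)=0$. Your own dissipation identity (at $\delta=0$) shows that $\Im a^\eta_\omega(v,v)$ is a sum of nonpositive terms, so each vanishes; in particular the $T^-$ contribution vanishes, which forces every propagating Fourier component of $v|_{\Gamma_\pm}$ to be zero. Thus $v$ is purely evanescent outside the slab --- not an ``incoming-radiation mode,'' as you put it. Since $f$ pairs $v|_{\Gamma_-}$ against the single propagating harmonic $e^{\ii((\bar m+\kappa)x_2+\nu_{\bar m}x_1)}$, orthogonality of the Fourier modes in $x_2$ gives $f(v)=0$, and the Fredholm alternative delivers a solution. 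With that paragraph written out, your proof coincides with the paper's.
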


\noindent
Before proving the Theorem, we look more carefully at the forms involved in the weak formulation of the scattering problem.

Let the form $\aeo(u,v)$ in $V^\eta$ be defined by the left-hand side of the equality in Problem \ref{microweak}, with $u$ in place of $h^\eta$.  We split $\aeo$ into two parts:
\begin{eqnarray}
  \aeo(u,v) &=& \beo(u,v) - \omega^2 \ce(u,v)\,, \\
  \beo(u,v) &:=& \int_\Omega
          \left( (\e^\eta)^{-1}\gradp u \right)\dot\overline{\gradp v} \,\d A
  \,+ \\  &&  \notag
  - \,\io\!\! \int_\dOieta \! \eta (\sigma^\eta)^{-1}(\ue-\ui)(\bar\ve-\bar\vi) \,\d s
+  \e_0^{-1} \int_{\Gamma_\pm} \!(T \ue)\bar\ve \,\d x_2\,, \\
  \ce(u,v) &:=& \int_\Omega \mu^\eta\, u \bar v \,\d A\,.
\end{eqnarray}
The incident plane-wave field provides the forcing function for the scattering problem, given by the
functional $f$ on $V^\eta$ defined by
\begin{equation}\label{f}
f(v) := -2\ii\nu_{\bar m} \e_0^{-1}\int_{\Gamma_-} \!\! e^{\ii((\bar m+\kappa)x_2 + \nu_{\bar m} x_1)} \bar\ve \,\d x_2\,.
\end{equation}
The form $\beo$ is coercive, with constants that are independent of $\eta$, and $f$ is uniformly bounded:
\begin{eqnarray}
  && |\beo(u,v)| \leq (\gamma_1 + \gamma_2|\omega|) \|u\|_{V^\eta} \|v\|_{V^\eta}\,, \\
  && \Re b^\eta(u,u) \geq \delta \|u\|^2_{V^\eta}\,, \label{coercivity} \\
  && \| f \|_{(V^\eta)^*} < C \,.
\end{eqnarray}
The second inequality holds because of the upper bound on $\e^\eta$.  In order to establish the upper bound on $|a^\eta(u,v)|$, one uses the bound on $T$ discussed above and the following Lemma~\ref{interfaceterm} showing that the term involving the interface $\dOieta$ is a bounded form on $V^\eta$.  The bound on $f$ is proved in Lemma \ref{forcing}.

\begin{lemma} 
\label{interfaceterm}
\label{extension} 
There exists a constant $C$ such that the following hold for each $\eta~\in~\Upsilon$.
\begin{enumerate}
  \item For all $u = \ue\oplus\ui$ and $v = \ve\oplus\vi$ in $V^\eta$, the following inequality holds:
\begin{equation}
\left| \int_{\dOieta} \eta(\sigmas^\eta)^{-1} (\ue-\ui)(\bar\ve-\bar\vi) \d s \right| \leq C \|u\|_{V^\eta} \|v\|_{V^\eta}.
\end{equation}
  \item For each $u = \ue\oplus\ui \in V^\eta$, there are functions $\tilde\ue\in\Honek(\Omega)$ and $\tilde\ui\in H^1(\Omega)$ such that $\tilde\ue|_{\Oieta} = \ue|_{\Oieta}$ and $\tilde\ui|_{\Oieta} = \ui|_{\Oieta}$ and
\begin{equation}
  \| \tilde\ue \|_{H^1(\Omega)} \leq C \| u \|_{H^1(\Oeeta)}, \qquad
  \| \tilde\ui \|_{H^1(\Omega)} \leq C \| u \|_{H^1(\Oieta)}.
\end{equation}
\end{enumerate}
\end{lemma}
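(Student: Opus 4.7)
The plan is to reduce both assertions to classical estimates on the fixed unit cell $\QQ$ via the rescaling $y=x/\eta$, and to assemble the results cell by cell. The main obstacle is tracking powers of $\eta$ through the scaling so that the final constants are $\eta$-independent. For part~2 this forces the standard Poincar\'e--Wirtinger adjustment of the extension operator; for part~1 the prefactor $\eta$ in front of $(\sigmas^\eta)^{-1}$ is exactly what absorbs the anomalous $\eta^{-1}$ produced by the trace scaling.

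For part~2, since $\D\Subset\QQ$ has Lipschitz boundary, there is a bounded Sobolev extension $\mathcal{E}_0:H^1(\Dc)\to H^1(\QQ)$. Following Cioranescu--Saint-Jean-Paulin I would set $\mathcal{E}_e w := \mathcal{E}_0(w-\langle w\rangle_{\Dc}) + \langle w\rangle_{\Dc}$, which still equals $w$ on $\Dc$; the Poincar\'e--Wirtinger inequality on $\Dc$ then yields the sharpened bounds
\begin{equation*}
\|\nabla\mathcal{E}_e w\|_{L^2(\QQ)}\leq C\|\nabla w\|_{L^2(\Dc)}, \qquad \|\mathcal{E}_e w-\langle w\rangle_{\Dc}\|_{L^2(\QQ)}\leq C\|\nabla w\|_{L^2(\Dc)}.
\end{equation*}
On each $\eta$-cell $Q_{\eta,n}:=\eta(\QQ+n)\subset\Omega_0$, set $\hat u_e(y)=u_e(\eta(y+n))$, apply $\mathcal{E}_e$, and rescale back to $Q_{\eta,n}$; outside $\Omega_0$ leave $\tilde u_e=u_e$. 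Because $\mathcal{E}_e\hat u_e$ coincides with $\hat u_e$ on $\Dc$, the piecewise-defined $\tilde u_e$ has matching traces across every cell boundary, so it lies in $H^1(\Omega)$; moreover the $\kappa$-pseudoperiodicity is inherited because $\D\Subset\QQ$ forces the modifications to occur strictly inside the inclusions, never touching the lines $x_2\in\{0,2\pi\}$. The scaling identities $\|\nabla\hat u\|_{L^2(\QQ)}=\|\nabla u\|_{L^2(Q_\eta)}$ and $\|\hat u\|_{L^2(\QQ)}=\eta^{-1}\|u\|_{L^2(Q_\eta)}$ transport the sharpened unit-cell bounds into uniform cellwise estimates on $\|\nabla\tilde u_e\|_{L^2(Q_{\eta,n})}$ and $\|\tilde u_e\|_{L^2(Q_{\eta,n})}$, which sum to $\|\tilde u_e\|_{H^1(\Omega)}\leq C\|u\|_{H^1(\Oeeta)}$. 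For $\tilde u_i$, the compact containment $\D\Subset\QQ$ permits a cellwise extension supported inside $D_{\eta,n}$, with no mean subtraction needed and with cells patched trivially by zero traces.

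For part~1, I would use the unit-cell trace inequality $\|w\|_{L^2(\dD)}^2\leq C\|w\|_{H^1(\Dc)}^2$ and its analogue on $\D$. Since $ds=\eta\,d\hat s$ under the rescaling,
\begin{equation*}
\int_{\dD_{\eta,n}}|u_e|^2\,ds \;\leq\; C\bigl(\eta^{-1}\|u_e\|_{L^2(Q_{\eta,n}\setminus D_{\eta,n})}^2 + \eta\|\nabla u_e\|_{L^2(Q_{\eta,n}\setminus D_{\eta,n})}^2\bigr),
\end{equation*}
and similarly for $u_i$ on $D_{\eta,n}$. Multiplying by the prefactor $\eta$ converts the right-hand side into $C\|u\|_{H^1(Q_{\eta,n})}^2$ (uniformly in small $\eta$), and the standing hypothesis $\Re\sigma\geq\sigma_->0$ makes $|\sigma^\eta|^{-1}$ uniformly bounded. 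Applying Cauchy--Schwarz to the product $(\ue-\ui)(\bar\ve-\bar\vi)$ on $\dOieta$ and summing over all $\eta$-cells contained in $\Omega_0$ then delivers
\begin{equation*}
\Bigl|\int_{\dOieta}\eta(\sigmas^\eta)^{-1}(\ue-\ui)(\bar\ve-\bar\vi)\,ds\Bigr| \;\leq\; C\|u\|_{V^\eta}\|v\|_{V^\eta},
\end{equation*}
with $C$ independent of $\eta$, which is part~1.
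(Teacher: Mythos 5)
Part (1) and the exterior half of part (2) are correct and essentially reproduce the paper's argument: the paper proves (1) from the same unit-cell trace inequality $\int_\dD |w|^2\,\d s \leq C\int_{\Dc}(|w|^2+|\grad w|^2)\,\d A$, rescales it to $\int_{\dOieta}|\ue|^2\,\d s\leq C\eta^{-1}\|\ue\|^2_{H^1(\Oeeta)}$ (and likewise for $\ui$), and closes with Cauchy--Schwarz and $|\sigma|\geq\sigma_-$, exactly as you do. For $\tue$ the paper uses the Cioranescu--Saint-Jean-Paulin harmonic extension into the inclusion, which satisfies precisely the two ``sharpened'' unit-cell bounds you obtain from a generic extension operator plus Poincar\'e--Wirtinger (the harmonic extension is the Dirichlet minimizer, so it inherits them); your observations about trace matching across cells and pseudoperiodicity are also correct, since the modification takes place strictly inside the inclusions.

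The gap is in your construction of $\tui$. You propose an extension that vanishes near $\partial Q_{\eta,n}$ (``patched trivially by zero traces'') and assert that ``no mean subtraction is needed.'' This cannot satisfy the claimed $\eta$-uniform bound. On the unit cell, any $\tilde w\in H^1(\QQ)$ with $\tilde w=w$ on $\D$ and $\tilde w=0$ near $\partial\QQ$ pays a condenser-capacity cost: for $w\equiv 1$ on $\D$ one has $\int_\QQ|\grad\tilde w|^2\,\d A\geq c_0>0$ with $c_0$ the capacity of $\D$ in $\QQ$, even though $\grad w=0$. Since in two dimensions the Dirichlet integral is scale invariant while $\|\hat w\|_{L^2(\QQ)}=\eta^{-1}\|w\|_{L^2(Q_{\eta,n})}$, taking $\ui\equiv 1$ on every inclusion gives $\|\grad\tui\|^2_{L^2(Q_{\eta,n})}\geq c_0$ in each of the $O(\eta^{-2})$ cells, hence $\|\tui\|^2_{H^1(\Omega_0)}\geq c_0|\Omega_0|\eta^{-2}\to\infty$, while $\|u\|^2_{H^1(\Oieta)}=|\D|\,|\Omega_0|=O(1)$. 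So the ``zero near the cell boundary'' extension provably violates part (2) for $\tui$. The mean subtraction is exactly the device that removes this obstruction---the extension must carry the cell average $\langle \ui\rangle_{\D}$ outward rather than force the value to zero---but then the extension equals a cell-dependent constant, not zero, near $\partial Q_{\eta,n}$, and your trivial patching across cells is no longer available. This is why the paper instead extends $\ui$ harmonically into $\Dc$ in each cell and then extends once more across $\Omega\setminus\Omega_0$; to complete your proof you would need to adopt some such construction and explain how the cellwise interior extensions are glued into a single function of $H^1(\Omega)$ with uniformly controlled norm.
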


\begin{proof}
To prove part (1),
we use a Poincar\'e inequality in the square $\QQ$: For $\we\in H^1(\Dc)$,
\begin{equation}\label{Poincare1}
\int_\dD |\we(y)|^2 \d s(y) \,\leq\, C\!\!\int_{\Dc} \!\left( |\we(y)|^2 + |\grad\we(y)|^2 \right) \d A(y).
\end{equation}
Rescaling this estimate to a cell of size $\eta$ in the variable $x$ and summing over all cells in $\Omega_0$ yields
\begin{equation}
\int_{\dOieta} |\ue(x)|^2 \d s(x)
  = \frac{C}{\eta} \int_\Oeeta \left( |\ue(x)|^2 + \eta^2|\grad\ue(x)|^2 \right) \d A(x).
\end{equation}
This, together with an analogous estimate involving $\Oieta$ give
\begin{equation}
\int_{\dOieta} |\ue|^2 \d s  \leq  \frac{C}{\eta} \| \ue \|^2_{H^1(\Oeeta)}, \quad
\int_{\dOieta} |\ui|^2 \d s  \leq  \frac{C}{\eta} \| \ui \|^2_{H^1(\Oieta)}.
\end{equation}
Finally, recalling that $\sigma_{\!-}\leq \Re\sigma(x,y)$, we obtain
\begin{multline}
\left| \int_{\dOieta} \eta(\sigmas^\eta)^{-1} (\ue-\ui)(\bar\ve-\bar\vi) \d s \right|^2
\leq \eta^2 (\sigma_{\!-})^{-2} \int_{\dOieta} |\ue-\ui|^2 \d s  \int_\dOieta |\ve-\vi|^2 \d s \\
\leq C^2(\sigma_{\!-})^{-2} \left( \|\ue\|_{H^1(\Oeeta)} + \|\ui\|_{H^1(\Oieta)} \right)^2 \!
                                        \left( \|\ve\|_{H^1(\Oeeta)} + \|\vi\|_{H^1(\Oieta)} \right)^2
\leq C^2(\sigma_{\!-})^{-2} \|u\|^2_{V^\eta} \|v\|^2_{V^\eta}.
\end{multline}

An extension of $u=\ue$ in part (2) can be obtained from the extension operator defined in \cite{CioranescSaint-Jea1979}.  For the convenience of the reader and coherence of this text, we include a proof.  The extension is accomplished by defining $\tue$ to be a harmonic function in $\Oieta$ whose trace on $\dOi$ coincides with that of $\ue$.  The extension is $\kappa$-pseudoperiodic because $\ue$ is and $\dO\cap\dOi=\emptyset$.  Let $\hat x_n = n\eta \in \Omega_0$ where $n\in\mathbb{Z}^2$.
Using the following standard inequalities in the unit cube,
\begin{eqnarray}
&& \int_\QQ  |\grady \tue(\hat x + \eta y)|^2 \,\d A(y) \,\leq\, c \int_{\Dc} |\grady \ue(\hat x+\eta y)|^2 \,\d A(y), \\
&& \int_\QQ |\ue(\hat x + \eta y)|^2 \,\d A(y) \,\leq\, c \int_{\Dc} ( |\ue(\hat x + \eta y)|^2 + |\grady \ue(\hat x + \eta y)|^2 ) \,\d A(y),
\end{eqnarray}
we obtain the estimates in the scaled cubes $\hat x_n + \eta \Dc$\,,
\begin{equation}
  \int_{\eta\QQ} |\gradx \tue(\hat x+ x)|^2 \,\d A(x) 
    \leq c\int_{\eta \Dc} |\gradx \ue(\hat x + x)|^2 \,\d A(x)
\end{equation}
and
\begin{equation}
  \int_{\eta\QQ} |\tue(\hat x + x)|^2 \,\d A(x) 
    \leq 
  = c\!\! \int_{\eta \Dc} \!\!(|\ue(\hat x + x)|^2 + \eta^2 |\gradx \ue(\hat x + x)|^2) \,\d A(x).
\end{equation}
Summing up over all micro-cells in $\Omega_0$, we obtain
\begin{eqnarray}
 && \int_{\Omega_0} |\gradx \ue|^2 \,\d A(x) \,\leq\, c \int_{\Omega_{0\mathrm{e}}} |\gradx \ue|^2 \,\d A(x), \\
 && \int_{\Omega_0} |\ue|^2 \,\d A(x) \,\leq\, c \int_{\Omega_{0\mathrm{e}}} (|\ue|^2 + \eta^2|\gradx \ue|^2) \,\d A(x),
\end{eqnarray}
The result trivially extends to $\Omega$ because the edges of the slab $\Sigma_\pm$ do not intersect the resonators.  The extension of $\ui$ is handled identically in $\Omega_0$ and then extended to a function in $H^1(\Omega)$ by a harmonic function in $\Omega\!\setminus\!\Omega_0$ whose trace coincides with $\tilde\ui$ on $\Sigma_\pm$.  The extension to $\Omega$ can be taken to be in $\Honek(\Omega)$ if $\tilde\ui\in\Honek(\Omega_0)$.
\end{proof}
\medskip


\renewcommand{\labelenumi}{\alph{enumi}.}

\begin{lemma} 
\label{forcing} 
Let $f\in (V^\eta)^*$ be defined as in \eqref{f}.  There exists a constant $C$ such that $\| f \|_{(V^\eta)^*} < C$ for all $\eta\in\Upsilon$.
\end{lemma}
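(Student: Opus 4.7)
The plan is to reduce the bound to a standard trace estimate on the fixed domain $\Omega$, using the uniform extension operator from Lemma \ref{extension}(2) to remove any $\eta$-dependence.

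First, observe that $\Gamma_-$ is the segment $x_1 = a$, $0 \leq x_2 \leq 2\pi$, which lies entirely in the air region $\Omega \setminus \Omega_0$ and hence in $\overline{\Omega_\mathrm{e}^\eta}$ for every $\eta \in \Upsilon$. Consequently, for $v = \ve\oplus\vi \in V^\eta$, the trace $\ve|_{\Gamma_-}$ is a well-defined element of $L^2(\Gamma_-)$, and $f(v)$ depends only on this trace.

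Next, given $v = \ve \oplus \vi \in V^\eta$, apply Lemma \ref{extension}(2) to obtain an extension $\tilde \ve \in \Honek(\Omega)$ such that $\tilde \ve|_{\Omega_\mathrm{e}^\eta} = \ve$ and
\begin{equation}
\|\tilde \ve\|_{H^1(\Omega)} \leq C \|\ve\|_{H^1(\Omega_\mathrm{e}^\eta)} \leq C \|v\|_{V^\eta},
\end{equation}
with a constant $C$ independent of $\eta$. Since $\Gamma_- \subset \overline{\Omega_\mathrm{e}^\eta}$, we have $\tilde\ve|_{\Gamma_-} = \ve|_{\Gamma_-}$. The standard trace inequality on the fixed Lipschitz domain $\Omega$ then gives
\begin{equation}
\|\ve\|_{L^2(\Gamma_-)} = \|\tilde\ve\|_{L^2(\Gamma_-)} \leq C_{\mathrm{tr}} \|\tilde\ve\|_{H^1(\Omega)} \leq C' \|v\|_{V^\eta},
\end{equation}
with $C'$ independent of $\eta$.

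Finally, since $\nu_{\bar m} > 0$ is real, the kernel $e^{\ii((\bar m + \kappa)x_2 + \nu_{\bar m} x_1)}$ has modulus $1$ on $\Gamma_-$, whose length is $2\pi$. Cauchy--Schwarz yields
\begin{equation}
|f(v)| \leq 2|\nu_{\bar m}|\,\e_0^{-1}\sqrt{2\pi}\,\|\ve\|_{L^2(\Gamma_-)} \leq C \|v\|_{V^\eta},
\end{equation}
which is the desired uniform bound. The only nontrivial point is ensuring the trace estimate is uniform in $\eta$, and this is precisely what the extension operator of Lemma \ref{extension}(2) delivers.
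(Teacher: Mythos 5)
Your proof is correct and follows essentially the same route as the paper: both reduce the bound to the fixed domain $\Omega$ via the uniform extension operator of Lemma \ref{extension}(2), using that $f$ depends only on the trace of $\ve$ on $\Gamma_-$, which lies in the air region and hence in $\Oeeta$ for every $\eta$. The paper simply packages your explicit trace-plus-Cauchy--Schwarz step as the statement $|f(\tue)| \leq \|f\|_{H^1(\Omega)^*}\|\tue\|_{H^1(\Omega)}$, so the two arguments are the same in substance.
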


\begin{proof}

Let $u = \ue \oplus \ui \in V^\eta$ be given, and let $\tue$ be the extension of $\ue$ provided by part (2) of Lemma \ref{extension}.  Then
\begin{equation}
|f(u)| = |f(\ue)| = |f(\tue)| \leq \|f\|_{H^1(\Omega)^*} \|\tue\|_{H^1(\Omega)}
\leq c\|f\|_{H^1(\Omega)^*} \|\ue\|_{H^1(\Oeeta)} \leq c\|f\|_{H^1(\Omega)^*} \|u\|_{V^\eta},
\end{equation}
which implies
\begin{equation}
\|f\|_{(V^\eta)^*} \leq c \| f \|_{H^1(\Omega)^*}.
\end{equation}
\end{proof}

The forms we have defined give rise to bounded operators $\Beo$ and $\Ce$ from $V^\eta$ into itself such that
\begin{eqnarray}
  && \beo(u,v) = (\Beo u, v)_{V^\eta} \,,\\
  && \ce(u,v) = (\Ce u, v)_{V^\eta} \,,
\end{eqnarray}
where $(\cdot,\cdot)_{V^\eta}$ is the usual inner product in $V^\eta$.
Let $\tilde f$ denote that element of $V^\eta$ such that $(\tilde f,\cdot)_{V^\eta} = f$.
The scattering Problem \ref{microweak} now takes the form
\begin{equation} \label{scattering}
  \aeo(u,v) = f(v) \text{ for all $v\in V^\eta$,} \quad \text{or} \quad \Beo u - \omega^2 \Ce u = \tilde f.
\end{equation}

We now prove that the scattering Problem \ref{microweak} has a solution.  We refer to Bonnet-Ben Dhia and Starling \cite{Bonnet-BeStarling1994}, Theorem 3.1, where this is shown for a similar problem of scattering by a slab.

\medskip
\begin{proof}[Proof of Theorem \ref{existence}]
Since $\beo$ is coercive, $\Beo$ is a bijection with bounded inverse, and since $\mu\in L^\infty(\Omega)$, $\Ce$ is compact.  Therefore a Fredholm alternative is applicable: The scattering problem \eqref{scattering} has a solution $u$ if and only if $(\tilde f,v)=0$ for all
$v\in\mathrm{Null}(\Beo - \omega^2 \Ce)^\dagger$, or equivalently
\begin{equation}\label{solvability}
  f(v) = 0 \text{ for all $v\in V^\eta$ such that } \aeo(w,v) = 0 \text{ for all $w\in V^\eta$}.
  \quad\text{(solvability condition)}
\end{equation}
Now any function $v$ satisfying the adjoint eigenvalue condition
$\aeo(w,v) = 0$ for all $w\in V^\eta$ satisfies, in particular
\begin{equation}
  \aeo(v,v) = 0,
\end{equation}
and, since the imaginary part of $\aeo$ is nonpositive, we find that
\begin{equation}
  T_-(v) = 0.
\end{equation}
Therefore the trace of $v$ on $\Gamma_-$ possesses only decaying Fourier harmonics, exterior to the slab,
\begin{equation}
v(x) = \sum_{m:\ii\nu_m<0} \gamma^\pm_m e^{\ii((m+\kappa)x_2 + \nu_m |x_1|)}
\quad \text{ exterior to the slab,}
\end{equation}
whereas $f$ is defined through integration against the trace of the incident field, which possesses only a propagating harmonic.  Thus we obtain
\begin{equation}
f(v) = -2\ii\nu_{\bar m} \int_{\Gamma_-} \!\! e^{\ii((\bar m+\kappa)x_2 + \nu_{\bar m} x_1)}
\left( \sum_{m:\ii\nu_m<0} \bar\gamma^-_m e^{\ii(-(m+\kappa)x_2 + \nu_m x_1)} \right)
\d x_2 \,=\, 0 \,,
\end{equation}
{\it i.e.}, the solvability condition \eqref{solvability} is valid.  Thus, by the Fredholm alternative, there exists a solution $h^\eta_\omega$ to 
\begin{equation}
\aeo(h^\eta_\omega,v) = f(v) \quad \text{for all $v\in V^\eta$}.
\end{equation}
\end{proof}

Since we do not have {\it a priori} bounds on the solutions $h^\eta$, we scale for the time being the magnetic and electric fields by a number $m^\eta\in [0,1]$,
\begin{eqnarray}
  && m^\eta = \min\{ 1, \| h^\eta \|^{-1}_{L^2} \}, \\
  && u^\eta(x) = m^\eta h^\eta(x), \\
  && F^\eta(x) = m^\eta E^\eta(x),
\end{eqnarray}
such that the $u^\eta$ are solutions of the scattering problem with a scaled incident wave and are bounded in the $L^2$ norm uniformly in $\eta$:
\begin{eqnarray}
  && \aeo(u^\eta,v) = m^\eta f(v) \quad \text{ for all $v\in V^\eta$}, \label{eqnueta}\\
  && F^\eta = \frac{1}{\io}(\e^\eta)^{-1} \gradp u^\eta, \\
  && \| u^\eta \|_{L^2} \leq 1.
\end{eqnarray}
After proving the two-scale convergence of the $u^\eta$ to a solution of the homogenized equations \eqref{homogenized}, we will conclude that the $h^\eta$ are in fact uniformly bounded in $V^\eta$ (at the end of the proof of Theorem \ref{thm:vartwoscale}).   This result uses the fact that the scattering problem for the homogeneous slab has a unique solution.  It will be shown that this always holds true in the case that $\mu^\eta$ is real, in particular in the typical case of nonmagnetic materials, $\mu^\eta=\mu_0$.

The next Lemma establishes the existence of two-scale limits\footnote{
A sequence $u^\eta(x)$ two-scale converges to $u^0(x,y)$ ($u^\eta \tsc u^0$) in $\Omega$ if $\lim_{\eta\to0} \int_\Omega u^\eta(x)\phi(x,x/\eta) \,\d A(x)
     = \int_\Omega\int_\QQ u^0(x,y) \phi(x,y) \,\d A(y) \,\d A(x)$ for all smooth functions $\phi(x,y)$ in $\overline\Omega\times\mathbb{R}^2$, periodic in $y$.
}
of subsequences of $u^\eta$ and $\grad u^\eta$ whose $y$-dependence reflects the discontinuities at the surfaces of the micro-resonators.  No other feature of the weak-form Problem \ref{microweak} is used here.  The symbol "$\tsc$" means ``two-scale converges to".  We denote the characteristic function of $\D$ extended periodically to $\mathbb{R}^2$ by $\chii(y)$ and set $\chie(y) = 1-\chii(y)$.

\begin{lemma}[Two-scale limits] 
Every subsequence of $\Upsilon$ admits a subsequence $\Upsilon'$ and functions
$\ue^0(x) \in H^1_\kappa(\Omega), \, \ui^0(x) \in H^1_\kappa(\Omega_0)$,
$\ue^1(x,y) \in L^2(\Omega;\Honeper(\QD)/\mathbb{R})$, and $\ui^1(x,y) \in L^2(\Omega_0;\Honeper(\QQ)/\mathbb{R})$ 
such that, in $\Omega$,
\begin{equation}
\pair{.}{u^\eta(x) \tsc \chie(y)\ue^0(x) + \chii(y)\ui^0(x) \vspace{3pt}}
           {\grad u^\eta(x) \tsc \chie(y)[\grad \ue^0(x) + \grady \ue^1(x,y)]
                                                      + \chii(y)[\grad \ui^0(x) + \grady \ui^1(x,y)] \vspace{3pt}}
           {\}},  \quad \eta\in\Upsilon'.
\end{equation}
\end{lemma}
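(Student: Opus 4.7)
\bigskip
\noindent
\textbf{Proof sketch.}

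The plan is to first promote the scaled bound $\|u^\eta\|_{L^2}\leq 1$ to a uniform bound in the broken space $V^\eta$, then extend the exterior and interior pieces to uniformly $H^1$-bounded functions on all of $\Omega$ using Lemma \ref{extension}(2), and finally apply the Nguetseng--Allaire two-scale compactness theorem to each extension separately.

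For the energy estimate, I would test the scaled scattering equation \eqref{eqnueta} against $v=u^\eta$ and split $\aeo = \beo - \omega^2\ce$. Combining the coercivity \eqref{coercivity} with the bound $|\ce(u^\eta,u^\eta)|\leq\mu_+\|u^\eta\|_{L^2}^2\leq\mu_+$, Lemma \ref{forcing}, and $m^\eta\leq 1$ yields
$$
\delta\|u^\eta\|_{V^\eta}^2 \,\leq\, \Re(m^\eta f(u^\eta)) + \omega^2\mu_+\|u^\eta\|_{L^2}^2 \,\leq\, C\|u^\eta\|_{V^\eta} + \omega^2\mu_+,
$$
from which a uniform bound $\|u^\eta\|_{V^\eta} \leq C'$ follows. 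Lemma \ref{extension}(2) then produces extensions $\tilde u_e^\eta \in \Honek(\Omega)$ and $\tilde u_i^\eta \in H^1(\Omega)$ that agree with $\ue^\eta$ and $\ui^\eta$ on their respective subdomains and have uniformly bounded $H^1(\Omega)$ norms.

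Next, applying the standard two-scale compactness theorem to each of these uniformly $H^1$-bounded sequences and extracting a common subsequence $\Upsilon'\subset\Upsilon$ produces limits $\ue^0\in\Honek(\Omega)$, $\ui^0\in H^1(\Omega)$, and correctors $U_e^1,U_i^1\in L^2(\Omega;\Honeper(\QQ)/\mathbb{R})$ with $\tilde u_e^\eta\tsc\ue^0(x)$, $\grad\tilde u_e^\eta\tsc\grad\ue^0(x)+\grady U_e^1(x,y)$, and analogous limits for the interior extension. Setting $\ue^1:=U_e^1|_{\QD}$ and $\ui^1:=U_i^1$ gives functions of the claimed regularity. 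Because the $V^\eta$ element $u^\eta$ equals $\tilde u_e^\eta$ on $\Oeeta$ and $\tilde u_i^\eta$ on $\Oieta$, and $\chi_{\Oeeta}(x)=\chie(x/\eta)$, $\chi_{\Oieta}(x)=\chii(x/\eta)$ in $\Omega_0$ (with $\chi_{\Oieta}\equiv 0$ outside the slab), pairing $u^\eta$ and $\grad u^\eta$ with a smooth test function $\phi(x,x/\eta)$ and decomposing the domain of integration into $\Oeeta$ and $\Oieta$ yields the stated two-scale limits, since $\chie$ and $\chii$ depend only on the microvariable and thus act as admissible periodic weights.

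The principal technical point requiring care is the $\kappa$-pseudoperiodicity of $\ui^0$ on $\Omega_0$ claimed in the statement: the extension $\tilde u_i^\eta$ delivered by Lemma \ref{extension}(2) need not be pseudoperiodic, so a priori $\ui^0$ only lies in $H^1(\Omega)$. This is remedied by choosing, within $\Omega_0$, a cell-by-cell harmonic extension that inherits $\kappa$-pseudoperiodicity from the exterior field; since the resonators sit strictly inside each cell and hence do not touch the lines $x_2\in\{0,2\pi\}$, such a pseudoperiodic extension can be constructed and its two-scale limit then lies in $\Honek(\Omega_0)$ as asserted.
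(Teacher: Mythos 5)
Your argument is correct in outline, but it takes a genuinely different route from the paper's. The paper applies the basic $L^2$ two-scale compactness theorem directly to the broken sequences $u^\eta$ and $\grad u^\eta$, and then reconstructs the structure of the limits by hand: test fields $\Psi$ with $\Psi\dot n=0$ on $\dD$ kill the interface terms in the integration by parts, connectedness of $\D$ and $\Dc$ forces $u^0$ to be piecewise constant in $y$, a duality argument with divergence-free test fields shows $\ue^0,\ui^0\in H^1(\Omega_0)$ and produces the correctors, and the extension lemma is invoked only at the very end to match traces of $\ue^0$ across $\Sigma_\pm$. You instead extend \emph{first}, via Lemma \ref{extension}(2), to uniformly $H^1(\Omega)$-bounded sequences $\tueeta,\tuieta$, apply the standard $H^1$ two-scale compactness result to each, and then localize with $\chie(x/\eta)$, $\chii(x/\eta)$. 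This is shorter and outsources the hard structural work to the classical theorem; the price is that you lean on the extension operator throughout (including the less standard extension from $\D$ outward to $\QQ$ for the interior piece) and you need the admissibility of test functions of the form $\chi_{\mathrm{e,i}}(y)\Psi(x,y)\in L^\infty_\#(\QQ;C(\overline\Omega))$ when passing to the limit in the gradients --- a standard fact, and one the paper itself uses later, but worth stating. Two small cautions. First, your energy estimate and its conclusion are fine, but note that the trace-matching of $\ue^0$ across the slab edges $\Sigma_\pm$ (i.e.\ $\ue^0\in H^1(\Omega)$ rather than merely $H^1(\Omega_0)\oplus H^1(\Omega\setminus\Omega_0)$) comes for free in your scheme precisely because the exterior extension is globally $H^1(\Omega)$ --- you should say so, since it is the one place the paper needs the extension lemma. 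Second, your proposed fix for the pseudoperiodicity of $\ui^0$ is the weakest step: the cell-by-cell harmonic extension cannot literally ``inherit'' pseudoperiodicity from data that carries no such constraint (the top and bottom rows of resonators hold unrelated values of $\ui^\eta$ at fixed $\eta$), so the extension itself is generally not in $\Honek(\Omega_0)$. The clean argument --- essentially the one the paper sketches --- is to extend the whole problem pseudo-periodically past $x_2=2\pi$, run the identical compactness argument on the enlarged strip, and read off the relation between the traces of the limit at $x_2=0$ and $x_2=2\pi$ from the $H^1$ regularity of the limit across that line. With that substitution the proof is complete.
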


\begin{proof}
With $v=u^\eta$ in \eqref{eqnueta}, we obtain
\begin{equation}
b(u^\eta,u^\eta) = m^\eta f(u^\eta) + \omega^2 c(u^\eta,u^\eta).
\end{equation}
Then, using the coerciveness of $\beo$, the boundedness of $\mu$, and part (b) of Lemma \eqref{extension}, we obtain the estimate 
\begin{multline}
\delta \| u^\eta \|^2_{V^\eta} \leq \Re b(u^\eta,u^\eta)
  \leq |f(u^\eta)| + \omega^2 \int_\Omega \Re \mu^\eta|u^\eta|^2 \,\d A(x) \\
  \leq C\|u\|_{V^\eta} + \omega^2\mu_+ \|u^\eta\|^2_{L^2}
  \leq (C+\omega^2\mu^+) \| u^\eta \|_{V^\eta},
\end{multline}
from which it follows that
\begin{equation}
\| u^\eta \|_{V^\eta} \leq (C+\omega^2\mu_+)/\delta\,.
\end{equation}

We now have that $u^\eta$ and $\grad u^\eta$ (defined for points off $\dOieta$) are bounded sequences in $L^2(\Omega)$, and therefore, by Theorem 1.2 of Allaire \cite{Allaire1992}, there exist functions $u^0\in L^2(\Omega\times\QQ)$
and $\xi^0\in L^2(\Omega\times\QQ)^2$ and a subsequence $\Upsilon'\subset\Upsilon$ such that
\begin{equation}
  u^\eta(x) \tsc u^0(x,y), \;\text{ and }\; \grad u^\eta(x) \tsc \xi^0(x,y), \quad \eta\in\Upsilon'.
\end{equation}
The behavior of these functions is well known to be trivial in the region $\Omega\setminus\Omega_0$ outside the slab, where they have no $y$-dependence, and $u^0$, which we denote by $\ue^0$ in this region, is of class $H^1$ there.

Let
$\Psi\in C^\infty_0(\Omega_0;C^\infty_\#(\mathbb{R}^2))^2$ with $\Psi(x,y)\dot n = 0$ for $y\in\dD$ be given.  Upon integration by parts, the latter condition eliminates integrals over $\dOieta$ which would otherwise arise due to the discontinuity in $u^\eta$ on the interfaces of the resonators:
\begin{equation}\label{parts}
  \int_\Omega \grad u^\eta(x) \dot \Psi(x,x/\eta) \d x
  + \int_\Omega u^\eta(x) \left( \gradx\dot\Psi(x,x/\eta)
        + {\textstyle\frac{1}{\eta}} \grady\dot\Psi(x,x/\eta) \right) \d x \,=\, 0.
\end{equation}
Multiplying this equality by $\eta$ and using the two-scale convergence of $u^\eta$ and $\grad u^\eta$, the limit for $\eta\in\Upsilon'$ gives
\begin{equation}\label{hi}
\int_\Omega \int_\QQ u^0(x,y) \grady\dot\Psi(x,y) \,\d y \,\d x = 0.
\end{equation}
Let $\psi\in C^\infty_0(\Omega_0)$ be given and use a separable function $\phi(x)\Phi(y)$, with $\Phi(y)\dot n=0$ on $\dD$, in place of $\Psi(x,y)$ in \eqref{hi} to obtain
\begin{equation}
\int_\Omega \phi(x) \int_\QQ u^0(x,y) \grad\dot\Phi(y) \,\d y \,\d x = 0,
\end{equation}
from which we obtain
\begin{equation}
  \int_\QQ u^0(x,y) \grady\dot\Phi(y)\,\d y = 0,
\end{equation}
for each such $\Phi(y)$.  It follows from this and the connectedness of $\D$ and $\Dc$, that $u^0(x,y)$ is independent of $y$ in each of these domains.  Therefore, for $x\in\Omega_0$,
\begin{eqnarray}
 && u^0(x,y) = \chie(y) \ue^0(x) + \chii(y) \ui^0(x), \\
 && \ue^\eta(x) \tsc \chie(y) \ue^0(x), \\ 
 && \ui^\eta(x) \tsc \chii(y) \ui^0(x).
\end{eqnarray}
To prove that $\ue^0(x)\in H^1(\Omega_0)$, we must prove that there is a constant such that, for each smooth function $\Theta(x)$ with compact support in $\Omega_0$,
\begin{equation}
\left| \int_{\Omega_0} \ue^0(x) \grad\dot\Theta(x) \d x \right| \leq \text{const.} \| \Theta \|_{L^2(\Omega_0)^2}.
\end{equation}
By Lemma 2.10 of \cite{Allaire1992}, we can further impose upon $\Psi$ (in addition to $\Psi(x,y)\dot n = 0$ on $\dD$) the three properties (i) $\grady\dot\Psi(x,y)\!=\!0$, (ii) $\int_{\Dc} \Psi(x,y) \d y = \Theta(x)$, and
(iii) $\|\Psi\|_{L^2(\Omega_0\times \Dc)^2}
\,\leq\, c \|\Theta\|_{L^2(\Omega_0)^2}$.
One one hand, we obtain
\begin{multline}
\int_{\Omega_0} \grad\ue^\eta(x) \dot \Psi(x,x/\eta) \d x = \int_{\Omega_0} \ue^\eta(x) \gradx\dot\Psi(x,x/\eta) \d x
   \longrightarrow \int_{\Omega_0}\int_{\Dc} \ue^0(x) \gradx\dot\Psi(x,y) \d y \d x \\
                         = \int_{\Omega_0} \ue^0(x) \grad\dot\! \int_{\Dc} \!\!\Psi(x,y) \d y \d x
                         = \int_{\Omega_0} \ue^0(x) \grad\dot \Theta(x) \d x,
   \quad \eta\in\Upsilon'.
\end{multline}
On the other hand,
\begin{equation}
\int_{\Omega_0} \grad\ue^\eta(x) \dot \Psi(x,x/\eta) \d x \longrightarrow
       \int_{\Omega_0}\int_{\Dc}\xi^0(x,y)\dot\Psi(x,y) \d y \d x, \quad \eta\in\Upsilon',
\end{equation}
so that 
\begin{equation}
\int_{\Omega_0} \ue^0(x) \grad\dot \Theta(x) \d x
\,=\, \int_{\Omega_0}\int_{\Dc}\xi^0(x,y)\dot\Psi(x,y) \d y \d x, \quad \eta\in\Upsilon',
\end{equation}
and the inequality
\begin{equation}
\left| \int_{\Omega_0}\int_{\Dc}\xi^0(x,y)\dot\Psi(x,y) \d y \d x \right| 
\,\leq\, \|\xi^0\|_{L^2(\Omega_0\times \Dc)^2} \|\Psi\|_{L^2(\Omega_0\times \Dc)^2}
\,\leq\, c \|\xi^0\|_{L^2(\Omega_0\times \Dc)^2} \|\Theta\|_{L^2(\Omega_0)^2}
\end{equation}
demonstrates that $\grad \ue^0$ acts as a bounded linear functional on $L^2(\Omega_0)^2$ so that $\ue^0\in H^1(\Omega_0)$.  An analogous argument proves that $\ui^0 \in H^1(\Omega_0)$.  One can see that $\ue^0$ and $\ui^0$ are in $\Honek(\Omega_0)$ by applying the arguments above to an extension of $\Omega_0$ into a neighborhood of $x_2=2\pi$ with $u^\eta$ extended pseudo-periodically into this region.

To obtain the functions $\ue^1(x,y)$ and $\ui^1(x,y)$ in $\Omega_0$, we use
$\lim_{\eta\to0}\int_{\Omega_0} \grad\ue^\eta(x) \dot \Psi(x,x/\eta) \d x
= \int_{\Omega_0}\int_{\Dc} \grad \ue^0(x)\cdot\Psi(x,y) \d y \d x$
to obtain $\int_{\Omega_0}\int_{\Dc} \left( \xi^0(x,y) - \grad\ue^0(x) \right) \dot\Psi(x,y) \d y \d x = 0$,
which implies, as before, that
\begin{equation}
\int_{\Dc} \left( \xi^0(x,y) - \grad\ue^0(x) \right) \dot\Phi(y) \d y \d x = 0
\end{equation}
for all $\Phi(y)\in\Honeper(\QQ)^2$ with $\grad\dot\Phi\!=\!0$ and $\Phi\dot n\!=\!0$ on $\dD$,
from which we infer the existence of a function $\ue^1(x,y) \in L^2(\Omega_0;\Honeper(\Dc))$ such that
\begin{equation}
\xi^0(x,y) = \grad\ue^0(x) + \grady\ue^1(x,y), \quad y\in \Dc.
\end{equation}
In an analogous manner, one establishes the existence of $\ui^1(x,y) \in L^2(\Omega_0;\Honeper(\D))$ such that
\begin{equation}
\xi^0(x,y) = \grad\ui^0(x) + \grady\ui^1(x,y), \quad y\in \D.
\end{equation}

To prove that $\ue^0\in H^1(\Omega)$, we must prove that the trace of $\ue^0$ from the right and left of the slab boundaries $\Sigma_\pm$ are equal.  By Lemma \ref{extension}, the functions $\ue^\eta$ can be extended to functions $\tilde u^\eta\in H^1(\Omega)$ in such a way that $\| \tilde u^\eta\|_{H^1(\Omega)} < C$.  We may then extract a subsequence that weakly converges in $H^1(\Omega)$, and by Proposition 1.14 (i) of \cite{Allaire1992}, this subsequence two-scale converges to its weak limit, which must evidently be equal to $\ue^0$.
\end{proof}

The key to obtaining two-scale convergence of the $h^\eta$ to a solution of the scattering problem for a homogeneous slab is the uniqueness of this solution.  The weak form of this problem is given by \eqref{var4}.  If $\Im(\mu^*(x))>0$, then uniqueness is guaranteed.
As noted in section \ref{subsec:model}, equation \eqref{Immu}, $\Im(\mu^*(x))$ is indeed positive provided that $\mu^\eta(x,y)$ is pointwise real and positive.
{\em Theorems \ref{thm:vartwoscale}, \ref{thm:strong}, and \ref{thm:transmission} require as implicit assumptions that $\Im(\mu^*(x))>0$, or, more generally, that \eqref{var4} admits a unique solution.}

\begin{theorem}[Variational problem for the two-scale limit] 
\label{thm:vartwoscale}

The two-scale limiting functions $\he^0(x)$, $\hi^0(x)$, $\he^1(x,y)$, and $\hi^1(x,y)$ satisfy the following variational problem:
\begin{multline}\label{variationaltwoscale}
\int_\Omega \int_{\D*} \left( \e(x,y)^{-1}[\gradp \he^0(x) + \gradpy \he^1(x,y)] \right)
           \dot [\gradp \bar\ve^0(x) + \gradpy \bar\ve^1(x,y)] \d A(y) \d A(x) +  \\
\int_\Omega \int_{\D} \left( \e(x,y)^{-1}[\gradp \hi^0(x) + \gradpy \hi^1(x,y)] \right)
           \dot [\gradp \bar\vi^0(x) + \gradpy \bar\vi^1(x,y)] \d A(y) \d A(x) +  \\
- \omega^2 \int_\Omega \left[ \int_{\Dc} \mu(x,y) \d A(y) \right] \he^0(x) \bar\ve^0(x) \d A(x)
- \omega^2 \int_\Omega \left[ \int_{\D} \mu(x,y) \d A(y) \right] \hi^0(x) \bar\vi^0(x) \d A(x) + \\
- \io \int_\Omega \left[ \int_{\dD} \sigma(x,y)^{-1} \d s(y) \right] (\he^0(x)-\hi^0(x))(\bar\ve^0(x)-\bar\vi^0(x)) \d A(x) \\
= - 2 \ii\nu_{\bar m} \e_0^{-1}\int_{\Gamma_-} \!\! e^{\ii((\bar m+\kappa)x_2 + \nu_{\bar m} x_1)} \bar\ve \,\d x_2                   
\end{multline}
for all $\ve^0(x), \, \vi^0(x) \in \Honek(\Omega)$ and
$\ve^1(x,y) \in L^2(\Omega_0;\Honeper(\QD)) \oplus L^2(\Omega\!\setminus\!\Omega_0)$, $\vi^1(x,y) \in L^2(\Omega_0;\Honeper(\D))$.

This problem is equivalent to the the system
\begin{eqnarray}
  && \hi^0(x) = m(x) \he^0(x), \label{var1}\\
  & &\gradpy h^1(x,y) = \chie(y)\Pe(x,y)\gradp\he^0(x) +  \chii(y)\Pi(x,y)\gradp\hi^0(x), \label{var2}\\
  &&  E^0(x,y) = \frac{1}{\io} \e(x,y)^{-1}
         \left[ \grad h^0(x) + \gradpy h^1(x,y) \right], \label{var3}\\
  && \pair{\{}{\Eav(x) = {\displaystyle\frac{1}{\io}} \e^*(x)^{-1} \gradp \he^0(x),\vspace{1ex}}
        {\displaystyle\int_\Omega \left[ \io \Eav(x) \dot \gradp\bar v(x) - \omega^2 \mu^*(x) \he^0(x)\bar v(x) \right] \d A(x)
   \\ \hspace{9em}
   = \displaystyle - 2 \ii\nu_{\bar m} \e_0^{-1} \int_{\Gamma_-} \!\! e^{\ii((\bar m+\kappa)x_2 + \nu_{\bar m} x_1)} \bar v \,\d x_2
        \quad \text{for all } v \in H^1(\Omega),}{.} \label{var4}
\end{eqnarray}
in which the average $E$ field $\Eav$ is defined as
\begin{equation}\label{Eav2}
\Eav(x) := \int_\QQ E^0(x,y) \d A(y).
\end{equation}
\end{theorem}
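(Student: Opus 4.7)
The plan is to derive \eqref{variationaltwoscale} by substituting oscillating test functions into the weak form of Problem \ref{microweak}, pass to the two-scale limit as $\eta\to 0$ using the preceding lemma, and then show that the resulting identity is equivalent to the system \eqref{var1}--\eqref{var4}. Because uniform $V^\eta$-bounds on $h^\eta$ are not yet available, I would first work with the scaled solutions $u^\eta=m^\eta h^\eta$, which satisfy $\aeo(u^\eta,v)=m^\eta f(v)$ and are uniformly bounded in $V^\eta$, and recover the statement for the unscaled $h^\eta$ at the end using uniqueness for the homogenized problem.

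For the passage to the limit I would test against
\[
v^\eta(x)=\bigl(\ve^0(x)+\eta\,\ve^1(x,x/\eta)\bigr)\oplus\bigl(\vi^0(x)+\eta\,\vi^1(x,x/\eta)\bigr)\in V^\eta,
\]
with $\ve^0,\vi^0$ smooth and $\kappa$-pseudoperiodic on $\Omega$ and with correctors $\ve^1,\vi^1$ smooth, compactly supported in $\Omega_0$, and $\QQ$-periodic in $y$. Splitting the bulk integrals over $\Oeeta$ and $\Oieta$, the two-scale convergences of $u^\eta$ and $\gradp u^\eta$ from the preceding lemma carry the bulk term $\int_\Omega(\e^\eta)^{-1}\gradp u^\eta\dot\overline{\gradp v^\eta}\,\d A$ and the mass term $\omega^2\int_\Omega\mu^\eta u^\eta\bar v^\eta\,\d A$ to the first four terms of \eqref{variationaltwoscale}. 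The Dirichlet-to-Neumann and forcing contributions on $\Gamma_\pm$ pass to the limit by standard trace arguments since those sides lie away from the microstructure. The main obstacle is the surface term $-\io\int_{\dOieta}\eta(\sigma^\eta)^{-1}(\he^\eta-\hi^\eta)(\bar\ve-\bar\vi)\,\d s$; to handle it I would invoke the two-scale convergence of $\eta$-scaled boundary integrals, namely, for smooth $\phi(x,y)$ that is $\QQ$-periodic in $y$,
\[
\eta\int_{\dOieta}\phi(x,x/\eta)\,\d s(x)\longrightarrow\int_{\Omega_0}\int_{\dD}\phi(x,y)\,\d s(y)\,\d A(x),
\]
which follows from a cell-by-cell change of variable $x=\hat x_n+\eta y$. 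Combined with the two-scale limits of the traces of $h^\eta$ and $v^\eta$ on $\dOieta$, this produces the fifth term of \eqref{variationaltwoscale}.

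For the equivalence with \eqref{var1}--\eqref{var4} I would specialize the test functions. Taking $\ve^0=\vi^0=0$ and letting $\ve^1,\vi^1$ vary isolates the two decoupled cell problems, which by density give \eqref{var2}; equation \eqref{var3} is then the pointwise definition of $E^0$. Setting next $\ve^1=\vi^1=0$ and $\ve^0=0$ with $\vi^0$ arbitrary in $\Honek(\Omega_0)$, only the interior bulk, interior mass, and boundary terms survive; the interior bulk term vanishes because $\Ei^0=0$ (as $\gradpy\hi^1=-\gradp\hi^0$ solves the interior cell problem), and the remaining pointwise identity $-\omega^2\hat\mu\hi^0+\io\hat\rho(\he^0-\hi^0)=0$ is exactly \eqref{var1}. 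Finally, with $\vi^0=0$ and $\ve^0$ varying in $\Honek(\Omega)$, the exterior bulk term is recognized as $\io\Eav\dot\gradp\bar\ve^0$ via the definition of $\e^*$ from \eqref{Eav}, while the exterior mass and boundary terms combine through \eqref{var1} into $-\omega^2\mu^*\he^0\bar\ve^0$, delivering \eqref{var4}. The converse implication is a direct re-summation of these pieces.

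To lift the result from $u^\eta$ back to $h^\eta$, I would argue by contradiction. If $m^\eta\to 0$ along a subsequence, then the two-scale limit $u^0$ would solve \eqref{var4} with zero right-hand side and, under the assumption $\Im\mu^*>0$ (which guarantees uniqueness), $u^0\equiv 0$. Applying Rellich compactness to the bounded $H^1(\Omega)$-extensions $\tilde u^\eta$ supplied by Lemma \ref{extension} yields strong $L^2(\Omega)$-convergence $u^\eta\to 0$, contradicting $\|u^\eta\|_{L^2}=1$, which holds in the regime $\|h^\eta\|_{L^2}>1$. Therefore $m^\eta$ is bounded below, $h^\eta$ is uniformly bounded in $V^\eta$, and the two-scale limits of $h^\eta$ itself satisfy \eqref{variationaltwoscale} with the original forcing $f$.
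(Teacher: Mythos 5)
Your proposal follows essentially the same route as the paper: the same oscillating test functions $v^0+\eta v^1$ split into exterior and interior parts, the same passage to the two-scale limit term by term (with the surface term handled via two-scale convergence on the periodic interfaces $\dOieta$, which the paper obtains from the extension lemma together with the results of Allaire--Damlamian--Hornung), the same specializations of test functions to extract \eqref{var1}--\eqref{var4}, and the same use of uniqueness under $\Im\mu^*>0$ to rule out $m^0=0$ and recover uniform bounds on $h^\eta$ (the paper phrases this last step via strong two-scale convergence rather than your contradiction-plus-Rellich argument, but the substance --- Rellich applied to the $H^1(\Omega)$ extensions --- is identical). The proposal is correct.
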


\begin{proof}   
Let $\Upsilon'$ be a subsequence of $\Upsilon$ such that $m^\eta$ converges, say to $m^0\in [0,1]$.  We use test functions of the form
\begin{equation}
v(x) = \chie(x/\eta)\left[ \ve^0(x) + \eta \ve^1(x,x/\eta) \right] + \chii(x/\eta)\left[ \vi^0(x) + \eta \vi^1(x,x/\eta) \right] = v^0(x) + \eta v^1(x,x/\eta).
\end{equation}
that are smooth off the boundary of the resonators
in the weak-form Problem \ref{microweak}:
\begin{multline}\label{testing}
  \int_\Omega \left[ \e(x,x/\eta)^{-1} \gradp u^\eta(x) \dot
       \left( \gradp \bar v^0(x) + \gradpy \bar v^1(x,x/\eta) + \eta\gradpx \bar v^1(x,x/\eta) \right) \right] \d A \,+ \\
  -\omega^2 \int_\Omega \mu(x,x/\eta) u^\eta(x) \left( \bar v^0(x) + \eta \bar v^1(x,x/\eta) \right) \d A \,+ \\
  -\io \int_\dOieta \eta\sigma^{-1}(x,x/\eta) \left( \ue^\eta(x) - \ui^\eta(x) \right)
                   \left( \bar\ve^0(x) - \bar\vi^0(x) + \eta( \bar\ve^1(x,x/\eta) - \bar\vi^1(x,x/\eta) )\right) \d s \,= \\
  = - 2 \ii\nu_{\bar m} \int_{\Gamma_-} \!\! e^{\ii((\bar m+\kappa)x_2 + \nu_{\bar m} x_1)}
                      \left( \bar\ve^0(x) + \eta\bar{\ve}^{1}(x,x/\eta) \right) \d x_2\,.
\end{multline}

The following analysis uses two-scale convergence results of \cite{Allaire1992}; we refer in particular to the proof of Theorem 2.3 of that work.
Consider first the first term of \eqref{testing}.  By assumption, $\chi_{\mathrm{e,i}}(y) \e(x,y)^{-1} \in C[\Omega_0;L^\infty_\#(\mathbb{R})^2]^4$, and therefore we also have
\begin{equation}
\triple{.}
  {\zeta_\mathrm{e}(x,y) :=
  \chie(y)(\e(x,y)^{-1})^\dagger \left( \gradp \bar\ve^0(x) + \gradpy \bar\ve^1(x,x/\eta) \right)
  \in C[\Omega_0;L^\infty_\#(\mathbb{R})^2]^2,}
  {\zeta_\mathrm{i}(x,y) :=
  \chii(y)(\e(x,y)^{-1})^\dagger \left( \gradp \bar\vi^0(x) + \gradpy \bar\vi^1(x,x/\eta) \right)
  \in C[\Omega_0;L^\infty_\#(\mathbb{R})^2]^2,}
  {\xi(x,y) = (\e(x,y)^{-1})^\dagger \left( \gradp \bar\ve^0(x) + \gradpy \bar\ve^1(x,x/\eta) \right)
  \in C[\Omega\!\setminus\!\Omega_0;L^\infty_\#(\mathbb{R})^2]^2.}
  {.}
\end{equation}
This implies that the fields $\zeta_\mathrm{e,i}$ and $\xi$ {\em strongly} two-scale converge in their respective domains.\footnote
{\label{footnoteStrong}Strong two-scale convergence of $w^\eta(x)$ to $w^0(x,y)$ means that $w^\eta$ two-scale converges to $w^0$ and that $\int |w^\eta(x)|^2 \d x$ converges to
$\iint |w^0(x,y)|^2 \d x \d y$.}
As $\gradp u^\eta$ two-scale converges in $\Omega_0$, Theorem 1.8 of \cite{Allaire1992}\footnote{\label{footnoteThm18}Essentially, strong two-scale convergence of $u^\eta(x)$ to $u^0(x,y)$ and (weak) two-scale convergence of $v^\eta(x)$ to $v^0(x,y)$ implies weak convergence of $u^\eta(x)v^\eta(x)$ to $\int_\QQ u^0(x,y)v^0(x,y) \d A(y)$.}
justifies passing to the two-scale limit in \eqref{testing} in the entire domain $\Omega$.
A similar argument applies to the second term in \eqref{testing}.

One deals with the limit of the interface term by applying the theory of two-scale convergence on periodic interfaces.  We refer the reader to \cite{AllaireDamlamianHornung1995} and \cite{Neuss-Rad1996} for results on this subject.  Proposition 2.6 of \cite{AllaireDamlamianHornung1995}
applied to the extensions of $\ue^\eta$ and $\ui^\eta$ from Lemma \ref{extension} guarantee the two-scale convergence of the traces of these functions to $\ue^0(x)$ and $\ui^0(x)$.\footnote{
A sequence $w^\eta(x)\in L^2(\dOieta)$ with $\eta\int_\dOieta |w^\eta(x)|^2 \d s(x) < C$ two-scale converges to $w^0(x,y)\in L^2(\Omega\times\dD)$ if \,
$\lim_{\eta\to0} \eta \int_\dOieta w^\eta(x) \phi(x,x/\eta)\,\d s(x) =
\int_\Omega\int_\dD w^0(x,y) \phi(x,y) \,\d s(y) \,\d A(x)$ \, for all continuous $\phi(x,y)$ defined on $\bar\Omega\times\dD$.}
If $\sigma^0(x,y)$ is taken to be continuous, then by extending it to a continuous function on
$\bar\Omega\times\dD$ and incorporating it into the test function, we see that $\sigma^{-1}(x,x/\eta) (\ue^\eta(x)-\ui^\eta(x))$, defined on $\dOieta$, two-scale converges to
$\sigma^{-1}(x,y) (\ue^0(x)-\ui^0(x))$.

The limit for $\eta\in\Upsilon'$ yields the two-scale weak-form PDE system for the two-scale limits $\ue^0$, $\ui^0$, $\ue^1$, and $\ui^1$:
\begin{multline}\label{weaktwoscale}
  \int_\Omega\int_{\Dc} \left[ \e(x,y)^{-1} \left( \gradp\ue^0(x) + \gradpy\ue^1(x,y) \right) \cdot
                      \left( \gradp\bar\ve^0(x) + \gradpy\bar\ve^1(x,y) \right) \right] \d y\, \d x \,+\\
  + \int_\Omega\int_\D \left[ \e(x,y)^{-1} \left( \gradp\ui^0(x) + \gradpy\ui^1(x,y) \right) \cdot
                      \left( \gradp\bar\vi^0(x) + \gradpy\bar\vi^1(x,y) \right) \right] \d y\, \d x \,+\\
  -\omega^2 \int_\Omega \ue^0(x)\bar\ve^0(x) \int_{\Dc}\mu(x,y)\d y\,\d x
  -\omega^2 \int_\Omega \ui^0(x)\bar\vi^0(x) \int_{\D}\mu(x,y)\d y\,\d x \,+ \\
  - \io \int_\Omega \left( \ue^0(x) - \ui^0(x) \right) \left( \bar\ve^0(x) - \bar\vi^0(x) \right)
                                        \int_{\dD} \sigma^{-1}(x,y)\d y\,\d x \\
  \,=\, - 2 \ii\nu_{\bar m} \int_{\Gamma_-} \!\! e^{\ii((\bar m+\kappa)x_2 + \nu_{\bar m} x_1)}
                      \bar\ve^0(x) \d x_2\,, \\
  \text{for all} \; \bar\ve^0,\,\bar\vi^0 \in H^1(\Omega),\;
               \ve^1 \in L^2(\Omega;\Honeper(\Dc)), \, \vi^1 \in L^2(\Omega;\Honeper(\D)).
\end{multline}

Next we pull the two-scale variational problem apart to reveal the equations for the average fields and the cell problem for the corrector function.  In $\Omega_0$, we first put $v^0 = 0$ and obtain
\begin{multline}
  \int_\Omega\int_{\Dc} \left[ \e(x,y)^{-1} \left( \gradp \ue^0(x) + \gradpy\ue^1(x,y) \right) \cdot
                      \gradpy\bar\ve^1(x,y) \right] \d y\, \d x \,+\\
  + \int_\Omega\int_\D \left[ \e(x,y)^{-1} \left( \gradp\ui^0(x) + \gradpy\ui^1(x,y) \right) \cdot
                      \gradpy\bar\vi^1(x,y) \right] \d y\, \d x \,=\, 0 \\
   \text{for all} \; \ve^1 \in L^2(\Omega;\Honeper(\Dc)), \, \vi^1 \in L^2(\Omega;\Honeper(\D)).
\end{multline}
By using separable test functions $\ve^1(x,y) \mapsto \phi(x)\ve^1(y)$ and $\vi^1(x,y) \mapsto \phi(x)\vi^1(y)$, we obtain the cell problem for each~$x$:
\begin{multline}
  \int_{\Dc} \left[ \e(x,y)^{-1} \left( \gradp \ue^0(x) + \gradpy\ue^1(x,y) \right) \cdot
                      \gradpy\bar\ve^1(y) \right] \d y \,+\\
  + \, \int_\D \left[ \e(x,y)^{-1} \left( \gradp\ui^0(x) + \gradpy\ui^1(x,y) \right) \cdot
                      \gradpy\bar\vi^1(y) \right] \d y \,=\, 0 \\
   \text{for all} \; \ve^1 \in \Honeper(\Dc), \, \vi^1 \in \Honeper(\D).
\end{multline}
Using this together with the definitions (\ref{Pe}--\ref{Pi}) of the corrector matrices $\Pe$ and $\Pi$, we obtain equation \eqref{var2}.
In addition to $v^0=0$, let us also set $\ve^1 = 0$ and $\vi^1(y) = \xxi\dot y$ to obtain
\begin{equation}\label{gradterm}
  \int_\D \e(x,y)^{-1} \left( \gradp\ui^0(x) + \gradpy\ui^0(x,y) \right) \dot \kx\xxi\,\d y\,.
\end{equation}
It follows from this and \eqref{var3} (taken as the definition of $E^0(x,y)$) that
\begin{equation}\label{EoverD}
  \int_\D E^0(x,y) \d A(y) \,=\, 0\,.
\end{equation}
Now let $\vi^0(y)$ be arbitrary and set $\ve^0=0$ and $v^1=0$.  The gradient term vanishes by \eqref{gradterm}, and we obtain
\begin{multline}
 -\omega^2 \int_\Omega \left[ \int_\D \mu(x,y)\d y \right] \ui^0(x)\bar\vi^0(x)\,\d x \,+\\
 +\, \io \int_\Omega \left[ \int_\dD \sigma(x,y)^{-1} \d y \right] \left( \ue^0(x)-\ui^0(x) \right) \bar\vi^0(x)\,\d x = 0,
\end{multline}
or
\begin{equation}\label{jump1}
  \ui^0(x) = \frac{\hat\rho(x)}{\hat\rho(x) - \io\hat\mu(x)} \ue^0(x),
\end{equation}
which, by definition \eqref{two} of $m$, is equation \eqref{var1}.
Now set $\vi^1=0$ and let $\vi^0$ and $\ve^0$ be arbitrary to obtain
\begin{multline}
\int_\Omega \left[ \int_\QQ \e(x,y)^{-1} \left( \grad u^0(x) + \gradpy u^1(x,y) \right) \d y \right]
    \dot \gradp \bar v^0(x) \d x \,+ \\
  -\omega^2 \int_\Omega \left[ \int_\QQ \mu(x,y)M(x,y) \d y \right] \ue^0(x) \bar v^0(x) \d x \,=\, f(v^0).
\end{multline}
This, together with the definition \eqref{Eav2} of $\Eav(x)$ and equation \eqref{EoverD} gives
\begin{equation}
\Eav(x) \,=\, \int_\Dc E^0(x,y) \d A(y),
\end{equation}
and using the definition \eqref{alpha} of $\e^*(x)$ and the definition \eqref{three} of $\mu^*(x)$, we obtain \eqref{var4}.  We observe that equations (\ref{var3}--\ref{var4})
constitute the weak form of the homogenized system \eqref{homogenized}.
One also confirms that equations (\ref{var1}--\ref{var2}) together with the definitions of $\e^*$ and $\mu^*$ imply the two-scale weak-form system \eqref{weaktwoscale}.

Now, we must prove that $\|h^\eta\|_{L^2} < C$ for all $\eta\in\Upsilon'$ so that $u^\eta$ can be replaced by $h^\eta$ in the entire proof.  To do this, we first prove the {\em strong} two-scale convergence of $u^\eta(x)$ to $u^0(x,y)$ in $\Omega_0$, the strong convergence in $\Omega\!\setminus\!\Omega_0$ being standard (see footnote \ref{footnoteStrong} on page \pageref{footnoteStrong}).  By Lemma \ref{extension}, the extensions $\tueeta$ of the restrictions of $u^\eta$ to $\Oeeta$ are bounded in $H^1(\Omega_0)$, and therefore we can restrict $\Upsilon'$ to a subsequence that converges strongly to a function $w(x)$ in $L^2(\Omega_0)$.  By Theorem 1.14 (a) of \cite{Allaire1992}, we can restrict $\Upsilon'$ further so that $\tueeta(x)$ also two-scale converges to a function in $L^2(\Omega_0)$, namely $\ue^0(x)$.  These two facts imply the weak $L^2$ convergence of $\tueeta(x)$ to both $w(x)$ and $\ue^0(x)$, from which we obtain
\begin{equation}
\lim_{\eta\to0} \int_{\Omega_0} (\tueeta(x))^2 \d A(x) = \left[ \int_{\Omega_0} \ue^0(x) \right]^2.
\end{equation}
In a similar manner, we obtain
\begin{equation}
\lim_{\eta\to0} \int_{\Omega_0} (\tuieta(x))^2 \d A(x) = \left[ \int_{\Omega_0} \ui^0(x) \right]^2.
\end{equation}
Now that we have the strong two-scale convergence of $\tueeta(x)$ to $\ue^0(x)$ and $\tuieta(x)$ to $\ui^0(x)$ as well as the two-scale convergence of $\tueeta(x)\chie(x/\eta)$ to $\ue^0(x)\chie(y)$ and $\tuieta(x)\chii(x/\eta)$ to $\ui^0(x)\chii(y)$, we apply Theorem 1.8 of \cite{Allaire1992} (see footnote \ref{footnoteThm18} on page \pageref{footnoteThm18}) to obtain
\begin{multline}\label{strongtsc}
\lim_{\eta\to0} \int_{\Omega_0} \left(u^\eta(x)\right)^2 \d A(x) = 
  \lim_{\eta\to0} \int_{\Omega_0} \left(\tueeta(x)\right)^2 \chie(x/\eta) \,\d A(x) + 
  \lim_{\eta\to0} \int_{\Omega_0} \left(\tuieta(x)\right)^2 \chii(x/\eta) \,\d A(x) \\
  = \int_{\Omega_0} (\ue^0(x))^2 \!\int_\QQ \chie(y) \,\d A(y) \d A(x)
     + \int_{\Omega_0} (\ui^0(x))^2 \!\int_\QQ \chii(y) \,\d A(y) \d A(x) \\ 
      = \int_{\Omega_0} \int_\QQ (\ue^0(x,y))^2 \,\d A (y)\d A(x).
\end{multline}
By definition of $m^\eta$, if $m^0<1$ then $\| u^\eta \|_{L^2(\Omega)} = 1$ for $\eta$ sufficiently small, and by the strong two-scale convergence of $u^\eta$ to $u^0$, we obtain
\begin{equation}
\| u^0 \|_{L^2(\Omega\times\QQ)} = 1 \quad \text{($m_0 < 1$)}.
\end{equation}
Thus $\ue^0\not=0$ because of the relation \eqref{jump1}.  In addition, $\ue^0$ solves the homogenized system \eqref{homogenized} with forcing $m^0f$.   If in fact, if this solution is unique, we conclude that $m^0\not=0$.  We will show momentarily that it is unique at least in the case that $\Im(\mu^*(x))>0$.  We therefore obtain
\begin{equation}
 h^\eta = (m^\eta)^{-1} u^\eta \tsc (m^0)^{-1} u^0 =: h^0,
\end{equation}
obtaining a uniform $L^2$ bound on $h^\eta$.
All of the results in this proof are therefore valid if we replace all occurrences of $u$ with~$h$.
Because of the uniqueness of the solution to the two-scale variational problem, we conclude the convergences obtained in this proof hold for the entire sequence $\Upsilon$.

Finally, we explain why the solution to the homogenized system \eqref{var4} is unique if $\Im(\mu^*(x))>0$.  The argument is standard, so we will be brief.  Setting the right-hand side of the second equation in \eqref{var4} to zero and using $v = \he^0$ as a test function, we obtain
\begin{equation}
\int_\Omega \left[ \e^*(x)^{-1} \gradp\he^0(x)\dot \gradp\bar{\,\he}^{\!\!0}(x) - \omega^2 \mu^*(x) |\he^0(x)|^2 \right] \d A(x) = 0.
\end{equation}
By its definition \eqref{alpha}, $\Im \e^*(x)> 0$ also, and it follows that $\he^0(x)=0$.
\end{proof}
\medskip

\begin{theorem}\label{thm:strong}
The sequence $h^\eta(x)$ converges strongly in the spaces $V^\eta$ to its two-scale limit
$h^0(x,y) + \eta h^1(x,y)$ in the sense that
\begin{equation}
  \lim_{\eta\to0} \| h^\eta(x) - h^0(x,x/\eta) \|_{L^2(\Omega)} = 0 
\end{equation}
and 
\begin{equation}\label{oo}
  \lim_{\eta\to0} \| \grad h^\eta(x) - \gradx h^0(x,x/\eta) - \grady h^1(x,x/\eta) \|_{L^2(\Omega)} = 0.
\end{equation}
\end{theorem}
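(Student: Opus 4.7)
The strategy is to upgrade the weak two-scale convergences established in Theorem \ref{thm:vartwoscale} to strong ones by means of an energy identity, then read off \eqref{oo} as a consequence. The first convergence is essentially already in hand: equation \eqref{strongtsc} proved strong two-scale convergence of $u^\eta$ in $\Omega_0$, and the analogous statement in $\Omega\setminus\Omega_0$ is standard because the medium there is homogeneous. Since $m^\eta \to m^0 > 0$ (established at the end of the proof of Theorem \ref{thm:vartwoscale}), the same strong two-scale convergence holds for $h^\eta$. Writing out
\[
\int_\Omega |h^\eta(x)-h^0(x,x/\eta)|^2\,dA = \int_\Omega |h^\eta|^2 - 2\,\Re\!\int_\Omega h^\eta\,\overline{h^0(x,x/\eta)} + \int_\Omega |h^0(x,x/\eta)|^2,
\]
and using strong two-scale convergence to pass to the limit in each piece (the middle term via ordinary two-scale convergence tested against $\chie(y)\he^0(x)+\chii(y)\hi^0(x)$, after an approximation by smooth $\phi(x,y)$ permitted by the density of $C_0^\infty(\Omega;C_\#(\QQ))$ in $L^2(\Omega\times\QQ)$), shows that this quantity tends to zero, giving the first convergence.

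For \eqref{oo}, the plan is to obtain convergence of the Dirichlet-type energy $b^\eta(h^\eta,h^\eta)$ to its two-scale analog; combined with the two-scale weak convergence of $\grad h^\eta$ from Theorem \ref{thm:vartwoscale}, this will upgrade to strong two-scale convergence of the gradient. First, test Problem~\ref{microweak} by $v=h^\eta$ to obtain $b^\eta(h^\eta,h^\eta) = f(h^\eta) + \omega^2 c^\eta(h^\eta,h^\eta)$. The right-hand side converges: $f(h^\eta)\to f(\he^0)$ by strong $L^2$ convergence of $\he^\eta$ (and its trace on $\Gamma_-$ via the extension of Lemma \ref{extension}), and $c^\eta(h^\eta,h^\eta)\to \int_\Omega\!\int_\QQ \mu(x,y)|h^0(x,y)|^2\,dy\,dx$ by the strong two-scale convergence just established. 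On the other hand, testing the two-scale variational problem \eqref{variationaltwoscale} with $v^0=h^0$, $\ve^1=\he^1$, $\vi^1=\hi^1$ yields the same right-hand side expressed as $b^0(h^0,h^1;h^0,h^1) + \omega^2\!\iint \mu|h^0|^2$, where $b^0$ denotes the natural two-scale bilinear form with the three pieces --- gradient energy over $\QQ$, interface dissipation over $\dD$, and the unchanged $T$-term on $\Gamma_\pm$. Subtracting, we conclude $b^\eta(h^\eta,h^\eta)\to b^0(h^0,h^1;h^0,h^1)$. Since each of the three summands making up $\Re\,b^\eta$ is nonnegative under our sign assumptions ($\Re\,\e^{-1}>0$, $\omega\Im\,\sigma^{-1}\geq 0$, and $T^-$ has the appropriate sign), and each is lower semicontinuous under weak (two-scale) convergence of the corresponding traces and gradients, the convergence of the sum forces the convergence of each individual summand. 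In particular,
\[
\int_\Omega \bigl(\e^{\eta\,-1}\gradp h^\eta\bigr)\cdot\overline{\gradp h^\eta}\,dA \;\longrightarrow\; \int_\Omega\!\int_\QQ \e(x,y)^{-1}\bigl(\gradp h^0+\gradpy h^1\bigr)\cdot\overline{\gradp h^0+\gradpy h^1}\,dy\,dx.
\]
Combined with weak two-scale convergence of $\gradp h^\eta$ and the coercivity of $\Re\,\e^{-1}$, this is strong two-scale convergence of $\grad h^\eta$ to $\gradx h^0(x,y)+\grady h^1(x,y)$. A density/diagonal argument (approximate $h^1$ by smooth periodic functions in $L^2(\Omega_0;H^1_\#)$ and pass to the limit) then yields \eqref{oo}.

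The main obstacle is the handling of the interface piece $\eta\!\int_{\dOieta}\!\sigma^{-1}|\he^\eta-\hi^\eta|^2\,ds$. Its convergence to the two-scale limit requires strong two-scale convergence of the traces of $\he^\eta$ and $\hi^\eta$ on the oscillating interfaces $\dOieta$; this is achieved via the periodic-interface two-scale trace results of \cite{AllaireDamlamianHornung1995} applied to the extensions supplied by Lemma~\ref{extension}, exploiting that the weak $L^2$ convergence of the traces, together with the bounds $\eta\!\int_{\dOieta}|u^\eta|^2\,ds\leq C$ coming from the rescaled Poincar\'e inequality used there, is upgraded via the energy identity just as in the volume case. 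A secondary technical point is the a.e.\ evaluation $\grady h^1(x,x/\eta)$ when $h^1\in L^2(\Omega;H^1_\#/\mathbb{R})$; this is handled in the standard way by approximating $h^1$ by smooth $\zeta\in C_0^\infty(\Omega;C_\#^\infty(\QQ))$ and passing to a diagonal.
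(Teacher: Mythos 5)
Your proposal is correct, and its first half (the $L^2$ convergence via \eqref{strongtsc} and expansion of the square) is exactly what the paper does. For the gradient estimate \eqref{oo} you take a recognizably different route at the decisive step. The paper expands the sesquilinear form on the error itself, $b\bigl(h^\eta - h^0 - \eta h^1,\, h^\eta - h^0 - \eta h^1\bigr)$, shows each of the four resulting terms converges (the diagonal term via the equation $b(h^\eta,h^\eta)=f(h^\eta)+\omega^2 c(h^\eta,h^\eta)$, the cross terms by admissibility of $h^0+\eta h^1$ as a two-scale test function), observes the limits cancel, and then invokes the coercivity bound $\Re\, b^\eta(u,u)\geq \delta\|u\|^2_{V^\eta}$ once to get \eqref{oo} in a single stroke. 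You instead prove convergence of the total energy $b^\eta(h^\eta,h^\eta)$ to its two-scale analogue and then peel off the Dirichlet summand by exploiting that each of the three nonnegative pieces of $\Re\, b^\eta$ is lower semicontinuous under the relevant weak (two-scale) convergences, so convergence of the sum forces convergence of each piece; strong two-scale convergence of the gradient then follows from norm convergence plus weak convergence. Both arguments rest on the same ingredients (the energy identity, Theorem \ref{thm:vartwoscale}, the interface two-scale trace results, and a density argument to evaluate $\grady h^1(x,x/\eta)$), but the paper's version is more economical: it never needs the individual lower-semicontinuity statements, in particular not for the interface term, where your argument quietly requires lsc of $\eta\int_{\dOieta}\omega\,\Im(\sigma^{-1})|\he^\eta-\hi^\eta|^2\,\d s$ under surface two-scale convergence (true, but an extra citation to \cite{AllaireDamlamianHornung1995}), and it delivers the $V^\eta$-norm estimate, hence both limits of the theorem, simultaneously. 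Your version has the mild advantage of making explicit which part of the energy carries the gradient information.
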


\begin{proof}
The first limit follows from the strong two-scale convergence of $h^\eta(x)$ to $h^0(x,y)$ (\ref{strongtsc}) and Theorem 1.8 of \cite{Allaire1992}.  To prove the second limit, we write the expansion
\begin{multline}\label{expansion}
b\left(h^\eta(x) - h^0(x,y) - \eta u^1(x,y), h^\eta(x) - h^0(x,y) - \eta h^1(x,y)\right) \\
  = b\left(h^\eta(x), h^\eta(x)\right)
  - b\left(h^\eta(x), h^0(x,y) - \eta h^1(x,y)\right) +\\
  - b\left(h^0(x,y) - \eta h^1(x,y), h^\eta(x) \right) 
  + b\left(h^0(x,y) - \eta u^1(x,y),  h^0(x,y) - \eta h^1(x,y)\right).
\end{multline}
The first term on the right-hand side is equal to $\omega^2 c\left(h^\eta(x),h^\eta(x)\right) + f\left(h^\eta(x)\right)$,
which, by the strong two-scale convergence of $h^\eta(x)$ to $h^0(x,y)$ tends to
\begin{equation}\label{ooo}
\omega^2 \int_\Omega \mu^*(x) \left( \he^0(x) \right)^2 \d A(x) + f\left(\he^0(x)\right).
\end{equation}
By using the admissibility of $h^0 + \eta h^1$ as a test function for two-scale convergence together with Theorem 1.8 of \cite{Allaire1992} and its analog for two-scale convergence on hypersurfaces, we can pass to the two-scale limit in each of the other three terms, which is (plus or minus depending on the sign in \eqref{expansion})
\begin{multline}
   \int_\Omega \int_\QQ \e(x,y)^{-1} \left[ \gradpx h^0(x,y) + \gradpy h^1(x,y) \right] \cdot
           \left[ \gradpx \overline{h^0}(x,y) + \gradpy \overline{h^1}(x,y) \right] \,\d A(y) \d A(x) +\\
   - \io \int_\Omega \left[ \int_\dD \sigma(x,y)^{-1} \d s(y) \right] \left| \he^0(x)-\hi^0(x) \right|^2 \d A(x).
\end{multline}
By Theorem \ref{thm:vartwoscale} and the definition of $\mu^*$, this is equal to \eqref{ooo}, and we find that the right-hand side of \eqref{expansion} tends to zero with $\eta$.  By the coercivity of $b$ (\ref{coercivity}), the limit \eqref{oo} now follows.
\end{proof}

\begin{theorem}\label{thm:transmission}
The reflection and transmission coefficients for the problems of scattering by the micro-structured slabs (Problem \ref{microweak}) converge to those of the problem of scattering by the homogenized slab (equation \ref{var4}).
\end{theorem}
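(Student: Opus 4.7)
My plan is to exploit the strong convergence of Theorem \ref{thm:strong} on the vacuum region $\Omega\setminus\Omega_0$, where by hypothesis $\e^\eta = \e_0$ and $\mu^\eta = \mu_0$.  On this region the unit-cell problem carries no forcing, so the corrector may be taken to satisfy $h^1(x,y) \equiv 0$ and $h^0(x,y) = \he^0(x)$ is independent of $y$.  Consequently the strong two-scale limits of Theorem \ref{thm:strong} collapse to genuine strong convergence
\begin{equation*}
h^\eta \to \he^0 \;\text{ in } L^2(\Omega\setminus\Omega_0), \qquad
\grad h^\eta \to \grad\he^0 \;\text{ in } L^2(\Omega\setminus\Omega_0)^2,
\end{equation*}
so that $h^\eta \to \he^0$ strongly in $H^1(\Omega\setminus\Omega_0)$.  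By the trace theorem, the boundary values converge in $H^{1/2}(\Gamma_\pm)$, hence a fortiori in $L^2(\Gamma_\pm)$.

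Both $h^\eta$ and $\he^0$ are $\kappa$-pseudo-periodic solutions of the Helmholtz equation $\Delta w + \e_0\mu_0\omega^2 w = 0$ exterior to the slab, satisfying the outgoing condition, and therefore admit Rayleigh expansions
\begin{equation*}
h^\eta(x) = \hinc(x) + \sum_{m\in\mathbb{Z}} a_m^\eta\, e^{\ii((m+\kappa)x_2 - \nu_m x_1)} \; (x_1<a), \qquad
h^\eta(x) = \sum_{m\in\mathbb{Z}} b_m^\eta\, e^{\ii((m+\kappa)x_2 + \nu_m x_1)} \; (x_1>b),
\end{equation*}
with an analogous pair of expansions for $\he^0$ defining coefficients $a_m^0$ and $b_m^0$.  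These coefficients are precisely the reflection and transmission coefficients for the respective scattering problems.

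Evaluating on $\Gamma_- = \{x_1 = a\}$, the $m$-th Fourier coefficient of $(h^\eta - \hinc)e^{-\ii\kappa x_2}$ with respect to the complete orthonormal system $\{(2\pi)^{-1/2}e^{\ii m x_2}\}_{m\in\mathbb{Z}}$ equals $a_m^\eta\, e^{-\ii\nu_m a}$, and similarly $a_m^0\, e^{-\ii\nu_m a}$ for the homogenized field.  Since Fourier coefficients depend continuously on their argument in $L^2(\Gamma_-)$, the $L^2$-convergence of the traces established above yields $a_m^\eta \to a_m^0$ for every $m\in\mathbb{Z}$; the symmetric argument on $\Gamma_+ = \{x_1=b\}$ gives $b_m^\eta\to b_m^0$.

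Given Theorem \ref{thm:strong}, the only point requiring any attention is confirming that $h^1$ vanishes on $\Omega\setminus\Omega_0$ so that the strong two-scale convergence reduces there to ordinary strong $H^1$ convergence, permitting the trace-and-Fourier step.  This follows from the homogeneity of the material outside the slab together with the uniqueness (up to an additive constant in $x$) of the corrector in the vacuum cell.  No further obstacle is anticipated, as both the trace continuity and the pointwise convergence of Fourier coefficients are standard consequences of the preceding $H^1$ estimate.
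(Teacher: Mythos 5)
Your proposal is correct and follows essentially the same route as the paper: the paper's (much terser) proof likewise deduces convergence of the Fourier coefficients $\int_{\Gamma_\pm} h^\eta e^{-\ii(m+\kappa)x_2}\,\d x_2$ from the strong convergence of $h^\eta$ and $\grad h^\eta$ outside the slab provided by Theorem \ref{thm:strong}. Your additional observations --- that the corrector vanishes and the two-scale limit is $y$-independent in $\Omega\setminus\Omega_0$, so the strong two-scale convergence collapses to ordinary $H^1$ convergence whose traces on $\Gamma_\pm$ converge in $L^2$ --- are exactly the details the paper leaves implicit.
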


\begin{proof}
This follows from the convergence of the Fourier coefficients of the propagating harmonics of the solutions $h^\eta$ to those of the solution $\he^0$,
\begin{equation}
\lim_{\eta\to0} \int_{\Gamma_\pm} h^\eta(x) e^{-\ii(m+\kappa)x_2} \d x_2
 = \int_{\Gamma_\pm} \he^0(x) e^{-\ii(m+\kappa)x_2} \d x_2,
\end{equation}
which is valid because of the strong convergence of the solutions and their gradients outside the slab.
\end{proof}

\bibliography{KohnShipman}

\end{document}